\DeclarePairedDelimiter{\ceil}{\lceil}{\rceil}
\newtheorem{lemma}{Lemma}
\newtheorem{theorem}{Theorem}
\theoremstyle{definition}
\newtheorem{example}{Example}
\newcommand{\RNum}[1]{\uppercase\expandafter{\romannumeral #1\relax}}
\newcommand*{\rom}[1]{\expandafter\@slowromancap\romannumeral #1@}
\newcommand{\IX}{\underline{X}^{n}}
\newcommand{\Ix}{\underline{x}^{n}}
\newcommand{\IY}{\underline{Y}^{n}}
\newcommand{\Iy}{\underline{y}^{n}}
\newcommand{\IA}{\underline{A}^{n}}
\newcommand{\Ia}{\underline{a}^{n}}
\newcommand{\ItX}{\underline{\tilde{X}}^{n}}
\newcommand{\Itx}{\underline{\tilde{x}}^{n}}
\newcommand{\ItY}{\underline{\tilde{Y}}^{n}}
\newcommand{\Ity}{\underline{\tilde{y}}^{n}}
\newcommand{\ItA}{\underline{\tilde{A}}^{n}}
\newcommand{\Ita}{\underline{\tilde{a}}^{n}}
\newcommand{\vX}{X^{n}}
\newcommand{\vx}{x^{n}}
\newcommand{\vY}{Y^{n}}
\newcommand{\vy}{y^{n}}
\newcommand{\vA}{A^{n}}
\newcommand{\va}{a^{n}}
\newcommand{\vtX}{\tilde{X}^{n}}
\newcommand{\vtx}{\tilde{x}^{n}}
\newcommand{\vtY}{\tilde{Y}^{n}}
\newcommand{\vty}{\tilde{y}^{n}}
\newcommand{\vtA}{\tilde{A}^{n}}
\newcommand{\vta}{\tilde{a}^{n}}
\newcommand{\net}{\mathbf{BG}_t}
\newcommand{\strtyp}{\mathcal{T}_\epsilon^{(n)}}
\newcommand{\netz}{\mathbf{BG}_t=\underline{Z}_{t}}
\def\blfootnote{\gdef\@thefnmark{}\@footnotetext}
\begin{document}
\title{Lossless Coding of Correlated Sources with Actions}
\author{Oron Sabag, Haim H. Permuter and Asaf Cohen}

\maketitle 
\begin{abstract}
This work studies the problem of distributed compression of correlated sources with an action-dependent joint distribution. This class of problems is, in fact, an extension of the Slepian-Wolf model, but where cost-constrained actions taken by the encoder or the decoder affect the generation of one of the sources. The purpose of this work is to study the implications of actions on the achievable rates.

In particular, two cases where transmission occurs over a rate-limited link are studied; case A for actions taken at the decoder and case B where actions are taken at the encoder. A complete single-letter characterization of the set of achievable rates is given in both cases. Furthermore, a network coding setup is investigated for the case where actions are taken at the encoder. The sources are generated at different nodes of the network and are required at a set of terminal nodes, yet transmission occurs over a general, acyclic, directed network. For this setup, generalized cut-set bounds are derived, and a full characterization of the set of achievable rates using single-letter expressions is provided. For this scenario, random linear network coding is proved to be optimal, even though this is not a classical multicast problem.
Additionally, two binary examples are investigated and demonstrate how actions taken at different nodes of the system have a significant affect on the achievable rate region in comparison to a naive time-sharing strategy.
\end{abstract}
\blfootnote{This work was supported by the Israel Science Foundation, the ERC starting grant and the European Commission in the framework of the FP7 Network of Excellence in Wireless COMmunications (NEWCOM$\#$).
This paper will be presented in part at the 2014 IEEE International Symposium on Information Theory, Honolulu, HI, USA.
O. Sabag and H. H. Permuter are with the department of Electrical and Computer Engineering, Ben-Gurion University of the Negev, Beer-Sheva, Israel (oronsa@post.bgu.ac.il, haimp@bgu.ac.il).
A. Cohen is with the department of Communication Systems Engineering, Ben-Gurion University of the Negev, Beer-Sheva, Israel (coasaf@bgu.ac.il).}
\begin{IEEEkeywords}
 Actions, correlated sources, distributed compression, network coding, random linear network coding, Slepian-Wolf source coding.
\end{IEEEkeywords}
\section{Introduction}\label{section:intro}
The field of distributed encoding and joint decoding of correlated information sources is fundamental in information theory. In their seminal work, Slepian and Wolf (SW)\cite{Slepian_wolf_73_source_coding} showed that the total rate used by a system which distributively compresses correlated sources is equal to the rate that is used by a system that performs joint compression. An extension of this model for general networks was studied by Ho \emph{et al.} \cite{random}, who showed that this property is maintained using a novel coding scheme, Random Linear Network Coding (RLNC).

In past studies, the joint distribution of the sources has been perceived as given by nature; however, what if the system can take actions that affect the generation of sources?

For instance, consider a sensor network where measurements of temperature and pressure sensors are required at a set of terminal nodes. Each source symbol is acquired via a sensor and the resolution of the pressure sensors can be controlled by actions. After collecting data from the temperature sensors, we may wish to perform actions according to our needs. Based on a block of temperature measurements, actions are taken by modifying the pressure measurements' resolution. We model such a system as \textit{correlated sources with actions} with the following sources distribution: the source $X$ is a memoryless source that is distributed according to $P_{X}$, while the other source, $Y$, has a memoryless conditional distribution, $P_{Y|X,A}$, that is conditioned on the source $X$ and an action $A$.

In this paper, we cover two concepts for our model; the first is a classical multi-user setup where transmission occurs over rate-limited links. Here, actions can be performed at different nodes of the system: case A for actions that are taken at the decoder as described in Fig. \ref{fig:dec}, and case B for actions that are taken at the encoder as described in Fig. \ref{fig:enc}. In the second approach, we assume that transmission occurs over a \emph{given directed, acyclic network}. In this scenario, the case where actions are taken at the encoder is investigated. Our coding scheme combines both codes for coding of correlated sources with actions as well as Network Coding. Moreover, we define \emph{generalized cut-set bounds} for this setup, which are shown to be tight. To the best of our knowledge, actions have not been previously studied in a general network coding setup.

\begin{figure}[t]
\centering
        \psfrag{A}[][][1]{Encoder $1$}
        \psfrag{B}[][][1]{Encoder $2$}
        \psfrag{C}[][][1]{Decoder}
        \psfrag{D}[][][0.9]{$T_{1}(X^{n})\in2^{nR_X}$}
        \psfrag{E}[][][0.9]{$A^{n}(T_{1})$}
        \psfrag{F}[l][][0.9]{$T_{2}(Y^{n})\in2^{nR_Y}$}
        \psfrag{G}[][][0.9]{$\hat{X}^{n},\hat{Y}^{n}$}
        \psfrag{P}[][][1]{$P_{Y|X,A}$}
        \psfrag{X}[][][0.9]{$X^{n}$}
        \psfrag{Y}[][][0.9]{$Y^{n}$}
        \centerline{\includegraphics[height = 4.5cm]{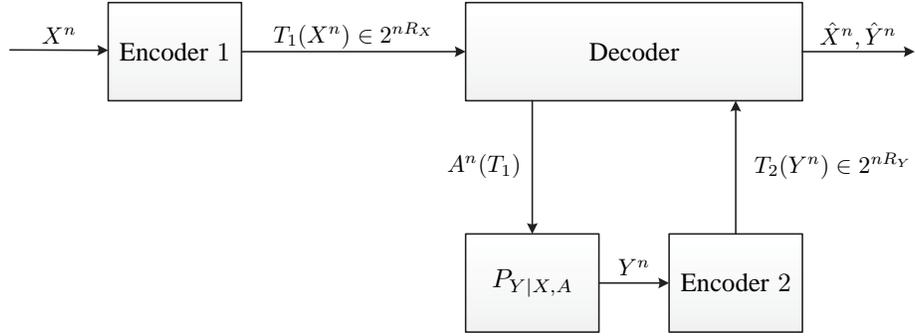}}
\caption{Case A - Correlated sources with actions taken at the decoder. The actions are based on the index $T_{1}$ sent by encoder $1$ and affect the generation of the source $Y^{n}$.}
\label{fig:dec}
\end{figure}

Specifically, the first case we consider is depicted in Fig. \ref{fig:dec}, where actions are taken at the decoder: based on its source observation $X^{n}$, which is independent and identically distributed (i.i.d.) according to $\sim P_{X}$, encoder $1$ gives an index $T_{1}(X^{n})$ to the decoder. Having received the index $T_{1}$, the decoder chooses the action sequence $A^{n}$. Nature then generates the other source sequence, $Y^{n}$, which is the output of a discrete memoryless channel $P_{Y|X,A}$, whose input is the pair $(X^{n},A^{n})$. Based on its observation $Y^{n}$,  an index $T_{2}(Y^{n})$ is sent to the decoder by encoder $2$. The reconstruction sequences $(\hat{X}^{n},\hat{Y}^{n})$ are then generated at the decoder, based on the indices that were given by the encoders. For this case, a single-letter characterization of the optimal rate region is presented in Theorem \ref{theorem:dec}.

The second case we consider is depicted in Fig. \ref{fig:enc}, where actions are taken at encoder $1$: based on its source observation $X^{n}$, which is i.i.d. $\sim P_{X}$, the first encoder chooses an action sequence $A^{n}$. The other source, $Y^{n}$, is then generated as in case A and is available at encoder $2$. Each encoder now chooses an index to be given to the decoder, based on its source observation. The reconstruction sequences $(\hat{X}^{n},\hat{Y}^{n})$ are then generated at the decoder based on the indices that were given by the encoders. This case is found to have better performance than case A, which is intuitive since in case A actions are constrained to be a function of $T_1$, while in case B actions are a function of the explicit source $X^{n}$. Moreover, in case A encoder $1$ is required to describe completely the actions' information within the index $T_1$, while in case B partial actions' information can be sent within $T_2$. In Theorem \ref{theorem:enc}, we characterize the optimal rate region for this case using single-letter terms. In Section \ref{section:examples}, we demonstrate and prove in two binary examples how performing actions at the encoder or the decoder have a significant advantage compared to a naive time-sharing strategy.

\begin{figure}[t]
\centering
        \psfrag{A}[][][1]{Encoder $1$}
        \psfrag{B}[][][1]{Encoder $2$}
        \psfrag{C}[][][1]{Decoder}
        \psfrag{D}[][][0.9]{$T_{1}(X^{n})\in2^{nR_X}$}
        \psfrag{E}[r][][0.9]{$A^{n}(X^{n})$}
        \psfrag{F}[l][][0.9]{$T_{2}(Y^{n})\in2^{nR_Y}$}
        \psfrag{G}[][][0.9]{$\hat{X}^{n},\hat{Y}^{n}$}
        \psfrag{P}[][][1]{$P_{Y|X,A}$}
        \psfrag{X}[][][0.9]{$X^{n}$}
        \psfrag{Y}[cb][][0.9]{$Y^{n}$}
        \centerline{\includegraphics[height=4.5cm]{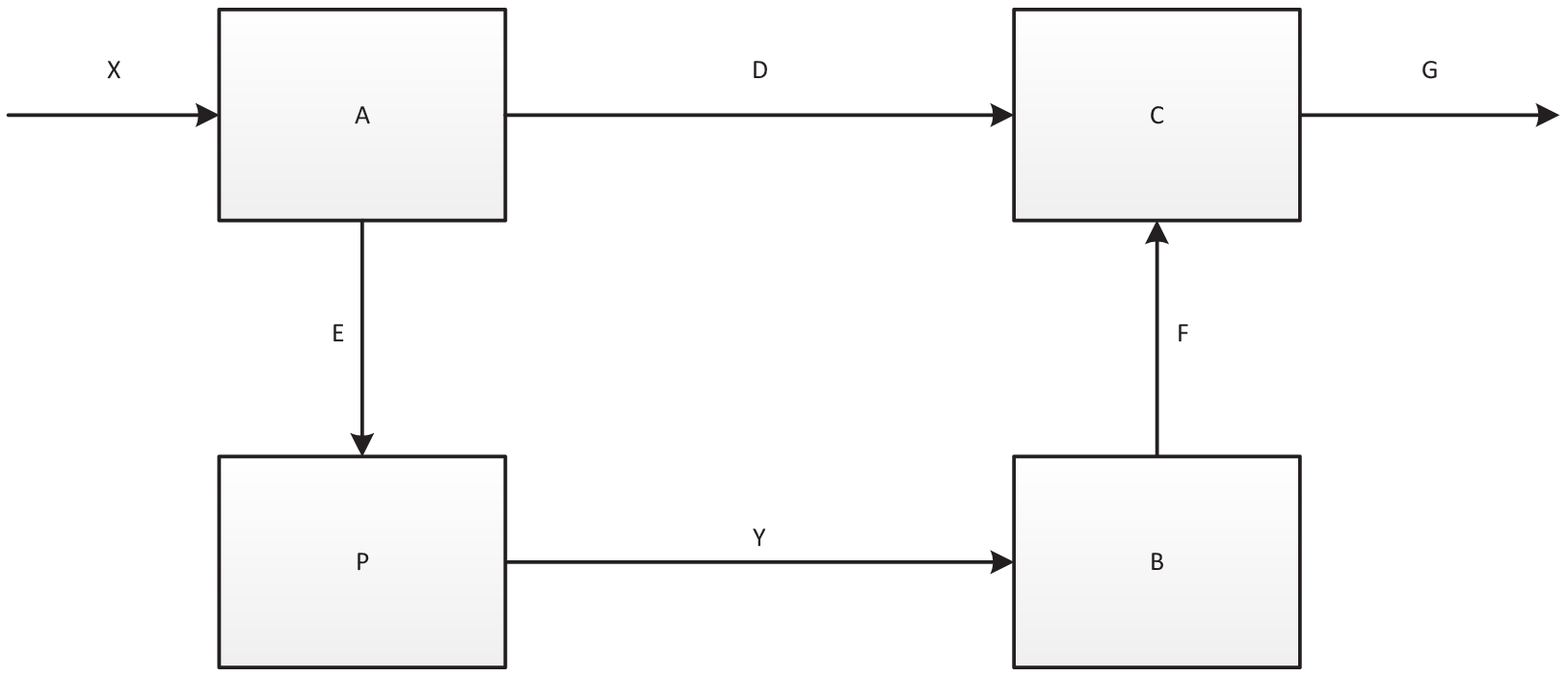}}
\caption{Case B - Correlated sources with actions taken at the encoder. The actions are based on the source $X^{n}$ and affect the generation of the source $Y^{n}$.}
\label{fig:enc}
\end{figure}
In the general network scenario, the case where actions are taken at the encoder is investigated. The setup is depicted in Fig. \ref{fig:network}. The nodes $s_1$ and $s_2$ play the role of the encoders as in case B and source generation remains the same. However, transmission occurs over a general, acyclic, directed network. Each link in the network has a known capacity, which represents a noiseless link in units of bits per unit time. Nodes in the network are allowed to perform encoding based on the messages on their input links, except for a set of terminal nodes $\tau$. Each terminal node, $t\in\tau$, is required to reconstruct both sources in a lossless manner. To characterize the set of achievable rates, we derived the conditions for which reliable communication can occur in terms of network capabilities and, lastly, proved its optimality by deriving the generalized cut-set bounds for this problem.

\begin{figure*}[b]
\centering
\begin{psfrags}
    \psfragscanon
    \psfrag{A}[][][.8]{$s_{1}$}
    \psfrag{C}[][][.8]{$s_{2}$}
    \psfrag{B}[c][][0.8]{$P_{Y|X,A}$}
    \psfrag{G}[][][0.7]{$Y^{n}$}
    \psfrag{H}[][][.7]{$\hat{X}^{n},\hat{Y}^{n}$}
    \psfrag{E}[b][][0.7]{$X^{n}$}
    \psfrag{F}[r][][0.7]{$A^{n}(X^{n})$}
\includegraphics[height=4.3cm]{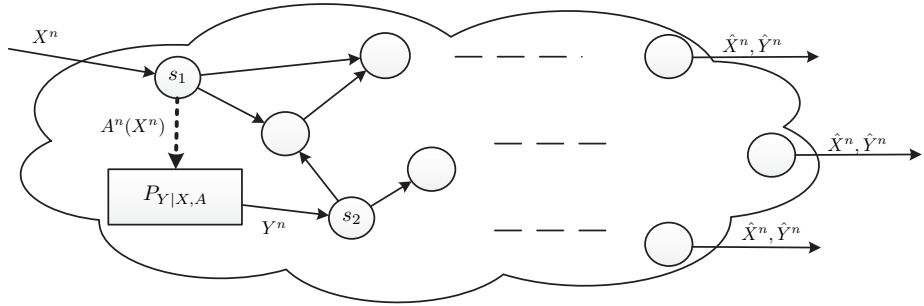}
\caption{Correlated sources in general networks with actions. Based on source $X^{n}$, node $s_{1}$ performs actions that affect the generation of $Y^{n}$. Transmission of the encoded sources occurs over an arbitrary acyclic directed network. Both sources are required at a set of terminal nodes. Note, the dashed arrow is the actions' cost-constrained link.} \label{fig:network}
\psfragscanoff
\end{psfrags}
\end{figure*}

In \cite{AhlswedeYeung_NetworkCoding_first}, it was proven by Li \emph{et al.} that linear network coding achieves optimality in multicast problems. Following this result, the RLNC approach was introduced by Ho \emph{et al.} in \cite{random} for a model of correlated sources' compression over an arbitrary network. Our model \emph{does not fall into the class of multicast problems} since no requirement for actions reconstruction is defined, yet it is very clear that the actions taken affect the rate region. Moreover, our set of achievable rates include terms of mutual information, which are not typical in multicast problems. Nevertheless, we prove that RLNC achieves optimality also in our network model. Furthermore, derivation of the achievable region for our model required an upper bound on the probability that two different inputs to a randomized linear network induce the same output at a receiver node. Calculation of these bounds, based on the result in \cite[Appendix A]{random}, led us to note that their result can be extended to a broader class of network coding problems. In Lemma \ref{lemma:bounds}, we state the upper bound and provide an alternative proof followed by an example that demonstrates how this lemma can be used in network coding problems in general, and, particularly, in our model.

The concept of actions in information theory was introduced by Weissman in \cite{Weissman_action_dependtent_state10}. The model is useful in cases where the user can perform actions that control the problem setting, such as receiving state information in channel coding or receiving side information (SI) in source coding problems. In \cite{Weissman_action_dependtent_state10}, a point to point channel with an action-dependent state was studied. Based on the input message, the transmitter was allowed to perform actions that affect the generation of states in the channel and are available at the transmitter. In \cite{6503450}, Choudhuri and Mitra studied an adaptive actions setting; actions' performance was not based only on the message but also on a causal observation of the channel state. This adaptive setup was proved to have the same performance as in \cite{Weissman_action_dependtent_state10}. Later, some extensions to the multi-user setups were considered, such as multiple access channel (MAC) with cribbing and controlled encoders by Permuter and Asnani \cite{PermuterAsnani_cribbing}, and MAC with action-dependent state information at one encoder \cite{DiksteinPermutyerShamai_ISIT12_MAC_actions} by Dikstein \emph{et al.}. In all the setups described above, considering actions was proved to increase the capacity rate region.

Of most relevance to this paper is the work of Permuter and Weissman in \cite{PermuterWiessman_vending_11}, relating to source coding with SI, also termed the Vending Machine (VM). A Wyner-Ziv model as in \cite{WynerZiv} was considered, yet with actions. Actions are performed at the encoder or the decoder and might affect the quality of the SI available to the decoder. This extension was proved to have a significant impact on the performance of such a system. In \cite{zhao14_compression_actions}, Zhao \emph{et al.} studied a new role for actions that affect the distribution of an information source. An action-dependent information source is generated and a reconstruction of the latter is required at terminal node. The optimal compression rate was characterized for the lossless case, and bounds on the rate-cost distortion function were given. Later on, in \cite{in_block}, Simeone considered  a VM model, but with sources that are not memoryless and with actions that might also be affected by causal observation of the SI. In \cite{action_double_sided}, Kittichokechai \emph{et al.} considered a source coding model where actions affect the generation of two-sided SI sequences; one is available to the encoder and the other one to the decoder. In \cite{information_embedding}, Ahmadi \emph{et al.} studied a new role of actions, where an additional decoder observes a function of the actions. A characterization of the information that this decoder can reconstruct was given for several setups. In \cite{ChiaAsnaniWeissman13_multi_terminal_source_coding_action}, Chia \emph{et al.} studied a multi-user setup of the VM; two decoders can observe different SI sequences, where both sequences were generated according to the same actions. In all the cited papers, actions were proved to be efficient while acquiring SI or generating an action-dependent information source; here, we study the role for actions that affect the distribution of an information source in a multi-user setups.

The remainder of the paper is organized as follows. In Section \ref{section:notation}, we formulate the problem for all communication models. Section \ref{section:main} includes  a statement of our main results regarding the optimal rate regions for case A, case B and the set of achievable rates for the general network scenario. Section \ref{section:examples} describes two binary examples. Section \ref{section:proofs} includes the proofs of case A and case B. A detailed proof for the network coding scenario is provided in Section \ref{section_network}. Finally, Section \ref{section:conclusions} summarizes the main achievements and insights presented in this work along with some possible future work.

\section{Notation And Problem Definition}\label{section:notation}
Let $\mathcal{X}$ be a finite set, and let $\mathcal{X}^{n}$ denote the set of all $n$-tuples of elements from $\mathcal{X}$. An element from $\mathcal{X}^{n}$ is denoted by $x^{n}=(x_1,x_2,\dots,x_n)$. If the dimension is clear from the context then boldface letters $\mathbf{x}$ will refer to $x^n$. Random variables are denoted by uppercase letters, $X$, and the previous notation also holds here, e.g. $X^{n}=(X_1,X_2,\dots,X_n)$ and $\vX$ stands for $X^{n}$. The probability mass function of $X$, the joint distribution function of $X$ and $Y$, and the conditional distribution of $X$ given $Y$ will be denoted by $P_{X}, P_{X,Y}$ and $P_{X|Y}$, respectively. Additionally, the notation $\ceil{x}$ stands for the smallest integer greater than $x$.

We consider a system of correlated sources with actions. Let us refer to case A as the case where the decoder is allowed to perform actions and to case B as the case where encoder $1$ performs actions. We provide here a definition for the setting of case A and the definition for the setting of case B is straightforward. The source sequence $\vX$ is such that $X_{i}\in \mathcal{X}$ for $i\in [1,n]$ and is distributed i.i.d. with a pmf $P_{X}$. The first encoder measures a sequence $X^{n}$ and encodes it in a message $T_{1}\in\{1,\dots,2^{nR_{X}}\}$, which is transmitted to the decoder. The decoder receives the index $T_{1}$ and selects an action sequence, where $A^{n}\in\mathcal{A}^{n}$. The action sequence affects the generation of the other source sequence $\vY$, which is the output of a discrete memoryless channel $P_{Y|X,A}$ with inputs of$(\vX,\vA)$. Specifically, given $\vX=\vx$ and $\vA=\va$, the source sequence $\vY$ is distributed as
\begin{align}\label{setup:memoryless}
p(y^{n}|x^{n},a^{n})=\prod_{i=1}^{n} p(y_{i}|x_{i},a_{i}).
\end{align}
Encoder $2$ receives the observation $y^{n}$ and encodes it in a message $T_{2}\in\{1,\dots,2^{nR_{Y}}\}$. The estimated sequences $(\hat{X}^n,\hat{Y}^n)$ are then obtained at the decoder as a function of the messages $T_{1}$ and $T_{2}$.

For the settings described above, a $(2^{nR_{X}},2^{nR_{Y}},n)$ \textit{code} for a block of length $n$ and rate pairs $(R_{X},R_{Y})$ consists of encoding functions:
\begin{align}\label{setup:encoders}
    T_{1} &: \mathcal{X}^n \rightarrow \{1,\dots,2^{nR_X}\}, \nonumber \\
    T_{2} &: \mathcal{Y}^n \rightarrow \{1,\dots,2^{nR_Y}\},
\end{align}

strategy functions:
\begin{align}\label{setup:strategyfunctions}
h_{d}:&\{1,\dots,2^{nR_X}\} \rightarrow \mathcal{A}^n \hspace{4mm}\text{for case A}, \nonumber\\
h_{e}:& \mathcal{X}^n \rightarrow \mathcal{A}^n \hspace{20mm}\text{for case B},
\end{align}

and a decoding function:
\begin{equation}\label{setup:decoder}
g: \{1,\dots,2^{nR_X}\} \times \{1,\dots,2^{nR_Y}\} \rightarrow \widehat{\mathcal{X}}^n \times \widehat{\mathcal{Y}}^n.
\end{equation}
Actions taken are subject to a cost constraint $\Gamma$, that is,
\begin{equation}\label{setup:cost}
E\left[\frac{1}{n}\sum_{i=1}^{n}\Lambda(A_{i})\right]\leq \Gamma.
\end{equation}
The \textit{probability of error} for a code $(2^{nR_{X}},2^{nR_{Y}},n)$ is defined as $P_{e}^{(n)}=\Pr((X^{n},Y^{n})\neq g(T_{1},T_{2}))$. For a given cost constraint $\Gamma$, a rate pair $(R_{X},R_{Y})$ is said to be \textit{achievable} if there exists a sequence of codes $(2^{nR_{X}},2^{nR_{Y}},n)$ such that $P_{e}^{(n)}\rightarrow 0$ as $n\rightarrow\infty$ and the cost constraint, \eqref{setup:cost}, is satisfied. The \textit{optimal rate region} is the convex closure of the set of achievable rate pairs. Let us denote the optimal rate regions as $\mathcal{R}_{A}$ and $\mathcal{R}_{B}$ for case A and case B, respectively.

\subsection{Network Model}
A network is represented as a directed, acyclic graph $\mathcal{G}=(\mathcal{V},\mathcal{E})$, where $\mathcal{V}$ is the set of network nodes and $\mathcal{E}$ is the set of links, such that information can be sent noiselessly from node $i$ to node $j$ if $(i,j)\in \mathcal{E}$. Each edge $l \in \mathcal{E}$ is associated with a nonnegative real number $c_{l}$, which represents its capacity in bits per unit time. We denote the origin node of a link $l$ as $o(l)$ and the destination of a link $l$ as $d(l)$.

We specify a \textit{network of correlated sources with actions} $(\mathcal{V},\mathcal{E},s_{1},s_{2},\tau)$ as follows. The source sequence $X^{n}$ is such that $X_{i}\in \mathcal{X}$ for $i\in [1,n]$ is i.i.d. with a pmf $P_{X}$. Based on its source observation $X^{n}$, node $s_{1}\in\mathcal{V}$ selects an action sequence $A^{n}\in\mathcal{A}^{n}$. The action sequence affects the generation of the other source sequence $Y^{n}$, which is the output of a discrete memoryless channel $P_{Y|X,A}$ with inputs of $(X^{n},A^{n})$. More specifically, given $X^{n}=x^{n}$ and $A^{n}=a^{n}$, the source sequence $Y^{n}$ is distributed as $p(y^{n}|x^{n},a^{n})=\prod_{i=1}^{n} p(y_{i}|x_{i},a_{i})$. The source sequence $Y^{n}$ is available at node $s_{2}\in \mathcal{V}\setminus \{s_{1}\}$. The source sequences $(X^{n},Y^{n})$ are demanded at a set of terminal nodes denoted as $\tau \in \mathcal{V}\setminus \{s_{1},s_{2}\}$. We assume that the source nodes $s_1,s_2$ have no incoming links and that each node $t\in\tau$ has no outgoing links.

For any vector of rates $(R_{l})_{l\in\mathcal{E}}$, a $\left(\left(2^{nR_{l}}\right)_{l\in\mathcal{E}},n\right)$ \emph{source code} consists of strategy function:
\begin{align}
     &h:\mathcal{X}^{n}\rightarrow\mathcal{A}^{n},
\end{align}
encoding functions:
\begin{align}
     &g_{l}: \mathcal{X}^{n}\rightarrow  \{1,\dots,2^{nR_{l}}\}  & \forall& l\in\mathcal{E},o(l)=s_{1},\nonumber\\
     &g_{l}: \mathcal{Y}^{n}\rightarrow  \{1,\dots,2^{nR_{l}}\}  & \forall& l\in\mathcal{E},o(l)=s_{2},\nonumber\\
     &g_{l}: \textstyle\prod_{l':d(l')=o(l)}\{1,\dots,2^{nR_{l'}}\}  \rightarrow  \{1,\dots,2^{nR_{l}}\} & \forall& l\in\mathcal{E},o(l)\not\in \{s_{1},s_{2}\},
\end{align}
and decoding functions, for each $t\in\tau$:
\begin{align}
     &\phi_t: \textstyle\prod_{l:d(l)=t}\{1,\dots,2^{nR_{l}}\}\rightarrow \hat{\mathcal{X}}^{n}\times\hat{\mathcal{Y}}^{n}.
\end{align}
We are interested in the set of possible values $(c_{l})_{l\in\mathcal{E}}$ for which for any $\epsilon>0$ there exists a sufficiently large $n$ and a $\left(\left(2^{nR_{l}}\right)_{l\in\mathcal{E}},n\right)$ code exists satisfying $R_{l}\leq c_{l}$ for all $l\in\mathcal{E}$, such that $\Pr((\hat{X}_t^{n},\hat{Y}_t^{n})\neq (X^{n},Y^{n}))\geq 1-\epsilon$ for each $t\in\tau$ and $E\left[\frac{1}{n}\sum_{i=1}^{n}\Lambda(A_{i})\right]\leq \Gamma$. We call the closure of this set of rate vectors the \textit{set of the achievable rates}, which we denote by $\mathcal{R}_{N}$.

Given any set $A\subset\mathcal{V}$ and a node $t\in\mathcal{V}\setminus A$, a \textit{cut} $\mathcal{V}_{A;t}$ is a subset of vertices that includes $A$ but is disjoint from $t$, that is, $A\subseteq\mathcal{V}_{A;t}$ and $\mathcal{V}_{A;t}\cap t = \emptyset$. Given a cut $\mathcal{V}_{A;t}$, the \textit{capacity of a cut} $\mathcal{C}(\mathcal{V}_{A;t})$ is the sum over all capacities of edges $l\in \mathcal{E}$ such that $o(l)\in\mathcal{V}_{A;t}$ and $d(l)\not\in\mathcal{V}_{A;t}$; that is,
\begin{equation}\label{capacityofacut}
    \mathcal{C}(\mathcal{V}_{A;t}) = \sum \limits_{l\in\mathcal{E}: o(l)\in \mathcal{V}_{A;t}, d(l)\not\in \mathcal{V}_{A;t}} c_{l}.
\end{equation}
For given sets $A$ and node $t$, let $\mathcal{V}^{\ast}_{A;t}$ be the \textit{minimum cut}, which is the cut minimizes the capacity of a cut among all cuts $\mathcal{V}_{A;t}$.
Finally, for given non-intersecting sets $A,\tau$ we define $\mathcal{C}(\mathcal{V}^{\ast}_{A;\tau})= \min_{t\in\tau} \mathcal{C}(\mathcal{V}^{\ast}_{A;t})$.
\section{Main Results}\label{section:main}
The following three theorems are the main results in this paper.
\begin{theorem}\label{theorem:dec}
The optimal rate region $\mathcal{R}_{A}$ for case A (See Fig. \ref{fig:dec}), i.e. correlated sources with actions taken at the decoder, is the closure of the set of triplets $(R_{X},R_{Y},\Gamma)$ such that
\begin{subequations}\label{region:dec}
\begin{align}
    R_{X} &\geq H(X|Y,A) + I(X;A),\label{deceq1}\\
    R_{Y} &\geq H(Y|X,A), \label{deceq2}\\
    R_{X}+R_{Y} &\geq H(X,Y|A) + I(X;A),\label{deceq3}
\end{align}
\end{subequations}
where the joint distribution of $(X,A,Y)$ is of the form:
  \begin{equation}\label{joint}
  P_{X,A,Y}=P_{X}P_{A|X}P_{Y|A,X},
\end{equation}
under which $E\left[\Lambda(A)\right]\leq \Gamma$.
\end{theorem}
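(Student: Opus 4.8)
The plan is to prove the two inclusions separately: \emph{achievability} by a two-layer scheme that first conveys a suitable action sequence and then performs Slepian--Wolf coding conditioned on it, and the \emph{converse} by Fano-type bounds combined with a time-sharing single-letterization.

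For achievability, fix a conditional pmf $P_{A|X}$ with $E[\Lambda(A)]\le\Gamma$ and let $P_{X,A,Y}$ be as in \eqref{joint}. I would generate a codebook of $2^{n(I(X;A)+\delta)}$ action sequences i.i.d.\ $\sim\prod P_A$; by the covering lemma, encoder $1$ finds, with high probability, an index $m_1$ with $(X^n,a^n(m_1))$ jointly typical, additionally bins $\mathcal{X}^n$ uniformly at random into $2^{nR_X'}$ bins, and sends $m_1$ together with the bin index of $X^n$. The decoder reads off $m_1$ and uses $A^n=a^n(m_1)$ as its action sequence, so the cost constraint is met up to a vanishing slack handled in the standard way; by conditional typicality, $(X^n,a^n(m_1),Y^n)$ is then jointly typical w.r.t.\ $P_{X,A,Y}$ with high probability. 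Encoder $2$ bins $\mathcal{Y}^n$ uniformly into $2^{nR_Y}$ bins and sends the bin index. The decoder now holds the side information $A^n$ together with the two bin indices, and reconstructs $(X^n,Y^n)$ by joint-typicality decoding; this is precisely Slepian--Wolf decoding \emph{conditioned on} $A^n$, which succeeds provided $R_X'>H(X|Y,A)$, $R_Y>H(Y|X,A)$, and $R_X'+R_Y>H(X,Y|A)$. Since $R_X=I(X;A)+\delta+R_X'$, letting $\delta,\epsilon\downarrow0$ gives every triple in the interior of \eqref{region:dec}; convexity of the region (a standard time-sharing argument) and taking closures finish this direction.

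For the converse, take any sequence of codes with $P_e^{(n)}\to0$ and let $\epsilon_n\to0$ be the corresponding Fano term, so $H(X^n|T_1,T_2)\le n\epsilon_n$ and $H(Y^n|T_1,T_2)\le n\epsilon_n$. Let $Q$ be uniform on $\{1,\dots,n\}$, independent of everything, and put $X=X_Q$, $A=A_Q$, $Y=Y_Q$; because $X^n$ is i.i.d.\ and \eqref{setup:memoryless} is memoryless, one gets $X\sim P_X$, the Markov relation $Q-(X,A)-Y$, and a joint law of $(X,A,Y)$ of the form \eqref{joint} with $E[\Lambda(A)]\le\Gamma$. For \eqref{deceq2} I would use $nR_Y\ge H(T_2)\ge H(T_2|X^n,A^n)\ge I(T_2;Y^n|X^n,A^n)=H(Y^n|X^n,A^n)-H(Y^n|X^n,A^n,T_2)$, noting that the last term is at most $H(Y^n|T_1,T_2)\le n\epsilon_n$ since $T_1$ is a function of $X^n$, while $H(Y^n|X^n,A^n)=\sum_i H(Y_i|X_i,A_i)=nH(Y|X,A)$. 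For \eqref{deceq3}, $n(R_X+R_Y)\ge H(X^n,Y^n)-n\epsilon_n=nH(X)+H(Y^n|X^n)-n\epsilon_n$ with $H(Y^n|X^n)=H(Y^n|X^n,A^n)=nH(Y|X,A)$, and $nH(X)+nH(Y|X,A)=H(X,Y|A)+I(X;A)$. For \eqref{deceq1}, write $nR_X\ge H(T_1)=I(T_1;X^n)=nH(X)-H(X^n|T_1)$, bound $H(X^n|T_1)\le I(X^n;T_2|T_1)+n\epsilon_n\le I(X^n;Y^n|T_1)+n\epsilon_n$, and then use that $A^n$ is a deterministic function of $T_1$ (hence of $X^n$) to get $I(X^n;Y^n|T_1)=H(Y^n|T_1)-H(Y^n|X^n)\le H(Y^n|A^n)-H(Y^n|X^n,A^n)\le nH(Y|A)-nH(Y|X,A)=nI(X;Y|A)$; since $H(X)-I(X;Y|A)=H(X|Y,A)+I(X;A)$, \eqref{deceq1} follows.

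The step I expect to be the main obstacle is the single-letter bound on $R_X$: the bounds on $R_Y$ and on the sum rate drop out by conditioning on $(X^n,A^n)$, but a direct attempt to produce $H(X|Y,A)$ fails because $A^n$ depends on the entire block $X^n$. The fix is to route the argument through the identity $H(X|Y,A)+I(X;A)=H(X)-I(X;Y|A)$ and to exploit that $A^n$ is a function of $T_1$ so as to swap conditioning on $T_1$ for conditioning on $A^n$ at the right places; the remaining single-letterizations are routine consequences of the i.i.d.\ source, the memoryless channel, and the relation $Q-(X,A)-Y$. On the achievability side the only delicate point is checking that, conditioned on the selected action codeword, $(X^n,Y^n)$ is distributed like an i.i.d.\ pair from $P_{X,Y|A}$, so that the conditional Slepian--Wolf error analysis applies verbatim.
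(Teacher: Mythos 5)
Your proposal is correct and follows essentially the same route as the paper: achievability via an action codebook of rate $I(X;A)$ (covering lemma) followed by Slepian--Wolf coding conditioned on $A^n$, and a Fano-based converse that, for the $R_X$ bound, reduces to the same multi-letter quantity $H(X^n)-I(X^n;Y^n|A^n)$ before single-letterizing via time-sharing. The only differences are cosmetic reorderings of the entropy identities in the $R_X$ chain (you go through $I(T_1;X^n)$ and $I(X^n;T_2|T_1)$, the paper through $H(A^n)+H(X^n|A^n,Y^n)$), and both rely on the same Markov relation $Q-(X,A)-Y$ to close the single-letterization.
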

\begin{theorem}\label{theorem:enc}
The optimal rate region $\mathcal{R}_{B}$ for case B (See Fig. \ref{fig:enc}), i.e. correlated sources with actions taken at the encoder, is the closure of the set of triplets $(R_{X},R_{Y},\Gamma)$ such that
\begin{subequations}
\begin{align}
    R_{X} &\geq H(X|Y,A) + I(X;A)-I(Y;A),\label{enceq1}\\
    R_{Y} &\geq H(Y|X,A), \label{enceq2}\\
    R_{X}+R_{Y} &\geq H(X,Y|A) + I(X;A),\label{enceq3}
\end{align}
\end{subequations}
where the joint distribution of $(X,A,Y)$ is of the form \eqref{joint}, under which $E\left[\Lambda(A)\right]\leq \Gamma$.
\end{theorem}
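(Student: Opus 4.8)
The plan is to prove the two inclusions separately, following the template used for Theorem~\ref{theorem:dec} and paying attention to where the extra $-I(Y;A)$ term in \eqref{enceq1} enters. For the \emph{converse}, fix a sequence of $(2^{nR_{X}},2^{nR_{Y}},n)$ codes with $P_{e}^{(n)}\to0$ and put $A^{n}=h_{e}(X^{n})$, so that $T_{1}$ and $A^{n}$ are deterministic functions of $X^{n}$, $T_{2}$ is a function of $Y^{n}$, and $Y^{n}$ is the output of the memoryless channel \eqref{setup:memoryless} driven by $(X^{n},A^{n})$; Fano's inequality gives $H(X^{n},Y^{n}|T_{1},T_{2})\le n\epsilon_{n}$. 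Bound \eqref{enceq2} comes from $nR_{Y}\ge H(T_{2})\ge I(T_{2};Y^{n}|X^{n},A^{n})=H(Y^{n}|X^{n},A^{n})-H(Y^{n}|X^{n},A^{n},T_{2})$, where the first entropy equals $\sum_{i}H(Y_{i}|X_{i},A_{i})$ by memorylessness and the second is $\le n\epsilon_{n}$ by Fano, since $T_{1}$ is a function of $X^{n}$. Bound \eqref{enceq3} comes from $n(R_{X}+R_{Y})\ge H(T_{1},T_{2})=I(T_{1},T_{2};X^{n},Y^{n})\ge H(X^{n},Y^{n})-n\epsilon_{n}$ together with $H(X^{n},Y^{n})=H(X^{n})+H(Y^{n}|X^{n},A^{n})$ (because $A^{n}=h_{e}(X^{n})$) and the identity $H(X)+H(Y|X,A)=H(X,Y|A)+I(X;A)$.

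The only nonroutine step in the converse is \eqref{enceq1}. I would write
\begin{align}
nR_{X}\ \geq\ H(T_{1})\ \geq\ I(T_{1};X^{n},A^{n}|Y^{n})\ =\ H(A^{n}|Y^{n})+H(X^{n}|Y^{n},A^{n})-H(X^{n}|Y^{n},A^{n},T_{1}),\nonumber
\end{align}
bound $H(X^{n}|Y^{n},A^{n},T_{1})\le H(X^{n}|T_{1},T_{2})\le n\epsilon_{n}$ (conditioning on $Y^{n}$ supplies $T_{2}$), and then combine the remaining two terms: $H(A^{n}|Y^{n})+H(X^{n}|Y^{n},A^{n})=H(X^{n},A^{n}|Y^{n})=H(X^{n})+H(Y^{n}|X^{n},A^{n})-H(Y^{n})\ge nH(X)+\sum_{i}H(Y_{i}|X_{i},A_{i})-\sum_{i}H(Y_{i})$, the last step by subadditivity of entropy. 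Single-letterizing with a uniform time index $Q$ — absorbed into $P_{A|X}$, so that the induced law of $(X_{Q},A_{Q},Y_{Q})$ has the form \eqref{joint} with $E[\Lambda(A_{Q})]\le\Gamma$ — and using $H(X)+H(Y|X,A)-H(Y)=H(X|Y,A)+I(X;A)-I(Y;A)$ gives \eqref{enceq1}. The key observation is to expand $I(T_{1};X^{n},A^{n}|Y^{n})$ rather than $I(T_{1};X^{n}|Y^{n})$: the former produces exactly the $H(A^{n}|Y^{n})$ term that cancels the $-H(A^{n})$ implicit in $H(X^{n},A^{n}|Y^{n})$, whereas the latter only yields the weaker Slepian--Wolf bound $R_{X}\ge H(X|Y)$.

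For \emph{achievability}, fix $P_{A|X}$ with $E[\Lambda(A)]\le\Gamma$, generate a codebook of $2^{nR_{A}}$ sequences $a^{n}(l)\sim\prod P_{A}$ with $R_{A}>I(X;A)$, and let encoder~$1$ use the strategy $h_{e}(x^{n})=a^{n}(l^{\ast}(x^{n}))$, where $l^{\ast}(x^{n})$ is an index with $(x^{n},a^{n}(l^{\ast}))$ jointly $\epsilon$-typical (a fixed default when none exists). Encoder~$1$ partitions $\mathcal{X}^{n}$ into $2^{nR_{X}}$ bins and sends the bin index of $x^{n}$; encoder~$2$ partitions $\mathcal{Y}^{n}$ into $2^{nR_{Y}}$ bins and sends the bin index of $y^{n}$; the decoder outputs the unique $(\hat{x}^{n},\hat{y}^{n})$ lying in the two received bins with $(\hat{x}^{n},h_{e}(\hat{x}^{n}),\hat{y}^{n})\in\mathcal{T}_{\epsilon}^{(n)}$. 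The covering lemma guarantees $l^{\ast}$ exists with high probability, whence the cost constraint is met and $(X^{n},A^{n},Y^{n})\in\mathcal{T}_{\epsilon}^{(n)}$ by conditional typicality, and the error probability is governed by three events: $\hat{x}^{n}=x^{n},\hat{y}^{n}\ne y^{n}$ with expected count $\approx 2^{n(H(Y|X,A)-R_{Y})}$, giving \eqref{enceq2}; $\hat{x}^{n}\ne x^{n},\hat{y}^{n}=y^{n}$ with expected count $\approx 2^{n(H(X|Y)-I(A;Y|X)-R_{X})}$, giving \eqref{enceq1} via $H(X|Y)-I(A;Y|X)=H(X|Y,A)+I(X;A)-I(Y;A)$; and $\hat{x}^{n}\ne x^{n},\hat{y}^{n}\ne y^{n}$ with expected count $\approx 2^{n(H(X)+H(Y|X,A)-R_{X}-R_{Y})}$, giving \eqref{enceq3}. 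Averaging over codebooks and bin assignments and selecting a good realization completes the argument.

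I expect the main obstacle to be twofold. In the converse it is spotting the single correct expansion $I(T_{1};X^{n},A^{n}|Y^{n})$ above; the rest is bookkeeping. In achievability it is the event $\hat{x}^{n}\ne x^{n},\hat{y}^{n}=y^{n}$: since $h_{e}(\hat{x}^{n})$ is a codeword and not a free variable, controlling its expected count requires that, conditioned on $\hat{x}^{n}$, the selected action sequence $a^{n}(l^{\ast}(\hat{x}^{n}))$ be approximately uniform over $\mathcal{T}_{\epsilon}^{(n)}(A|\hat{x}^{n})$ — a soft-covering (``cloud-mixing'') estimate of the kind used for the vending machine in \cite{PermuterWiessman_vending_11}. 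It is worth emphasizing that binning $X^{n}$ as a whole, rather than binning $X^{n}$ and the action index separately, is exactly what avoids an otherwise spurious constraint $R_{X}\ge H(X|Y,A)$ and makes the scheme meet the converse.
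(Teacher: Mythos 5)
Your proposal is correct in substance but takes a genuinely different route from the paper on both halves. For the converse, the paper never argues on the two-encoder model directly: it inherits \eqref{enceq1}--\eqref{enceq3} from the generalized cut-set bounds of Theorem~\ref{theorem:network} (Subsection~\ref{subsec:converse}), where the chain for the $s_1$-cut passes through $H(\mathcal{M}_1)+H(X^{n}|Y^{n},\mathcal{M}_1)$ and lands on the same pivotal quantity you reach, $H(X^{n})+H(Y^{n}|X^{n},A^{n})-H(Y^{n})$; your expansion of $I(T_{1};X^{n},A^{n}|Y^{n})$ is an equivalent bookkeeping of the same idea. One slip to fix: your displayed ``equality'' silently drops a term $-H(A^{n}|Y^{n},T_{1})$, since in case B $A^{n}$ is a function of $X^{n}$ but not of $T_{1}$; the bound survives because $H(A^{n}|Y^{n},T_{1})=H(A^{n}|Y^{n},T_{1},T_{2})\le H(X^{n}|T_{1},T_{2})\le n\epsilon_{n}$ by Fano, but the step should be written as an inequality up to $n\epsilon_{n}$, not an identity. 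On achievability the divergence is larger: the paper proves only the two corner points $\left(I(X;A)-I(Y;A)+H(X|Y,A),H(Y)\right)$ and $\left(H(X),H(Y|X,A)\right)$ and time-shares, outsourcing the first corner entirely to the vending-machine result of \cite{PermuterWiessman_vending_11}, whereas you propose a single joint random-binning scheme covering the whole region at once. Your scheme is more self-contained and avoids time-sharing, but it concentrates all the difficulty in exactly the step you flag: showing that a competing $\tilde{x}^{n}$ together with its induced action sequence $h_{e}(\tilde{x}^{n})$ collides with $y^{n}$ with probability about $2^{-nI(X,A;Y)}$. That estimate does not follow from standard packing lemmas, because $h_{e}(\tilde{x}^{n})$ is a deterministic, codebook-dependent function of $\tilde{x}^{n}$ rather than an independently drawn sequence; it must either be proved via a soft-covering argument or, as the paper effectively does, delegated to \cite{PermuterWiessman_vending_11}. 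Your two identities, $H(X|Y)-I(A;Y|X)=H(X|Y,A)+I(X;A)-I(Y;A)$ and $H(X)+H(Y|X,A)=I(X;A)+H(X,Y|A)$, both check out.
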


Note, for a fixed distribution of the form \eqref{joint}, $\mathcal{R}_{A}\subseteq \mathcal{R}_{B}$. In $\mathcal{R}_{B}$, $R_{X}$ has a looser constraint, reduced by a non-negative factor of $I(Y;A)$, while the sum-rate remains the same. In case A, actions' information should be described completely within the rate $R_{X}$ prior to the generation of $Y^{n}$. However, in case B the indices $T_{1}$, $T_{2}$ are transmitted independently. Thus, reduction of $R_X$ is by the maximum amount of actions' information that is implied from the index $T_2$, i.e. $I(Y;A)$. Moreover, representation of $\mathcal{R}_{A}$ and $\mathcal{R}_{B}$ by their corner points shows that $(R_X,R_Y)=\left(H(X),H(Y|X,A)\right)$ is a common corner point for both setups. Thus, for high rates of $R_X$ actions at different nodes of the system might have the same affect on the optimal rate regions.

The regions $\mathcal{R}_{A}$ and $\mathcal{R}_{A}$ reduce to those investigated in \cite{PermuterWiessman_vending_11} for the special case of allocating unlimited rate for $R_Y$, equivalently, having the source $Y$ available at the decoder. Having unlimited $R_Y$ implies that \eqref{deceq2}-\eqref{deceq3} and \eqref{enceq2}-\eqref{enceq3} are redundant. Thus, we only have a constraint on $R_X$. Theorem \ref{theorem:dec} is then reduced to the result of \cite[Sec.\rom{2}]{PermuterWiessman_vending_11} source coding with SI where actions are taken at the decoder, while Theorem \ref{theorem:enc} is reduced to the result of \cite[Sec.\rom{3}]{PermuterWiessman_vending_11} source coding with SI where actions are taken at the encoder. Another special case is when considering deterministic actions, that is, $A=a$; let us write the original optimal rate region of SW as $\mathcal{R}_{SW}(P_{X},P_{Y|X})$, with the explicit dependence on $P_{X}$ and $P_{Y|X}$. For this setting, both $\mathcal{R}_A$ and $\mathcal{R}_B$ reduce to $\mathcal{R}_{SW}(P_{X},P_{Y|X,A=a})$.

\begin{theorem}\label{theorem:network}
Given a correlated sources with action network $(\mathcal{V},\mathcal{E},s_{1},s_{2},\tau,\Gamma)$ (See Fig. \ref{fig:network}), the set of achievable rates $\mathcal{R}_{N}$ is such that
\begin{subequations}\label{NC:theoremeq}
\begin{align}
     \mathcal{C}(\mathcal{V}^{\ast}_{s_{1};,\tau}) &\geq I(X;A)-I(Y;A) + H(X|Y,A), \label{nceq1}\\
     \mathcal{C}(\mathcal{V}^{\ast}_{s_{2};\tau}) &\geq H(Y|X,A),\label{nceq2}\\
     \mathcal{C}(\mathcal{V}^{\ast}_{s_{1},s_{2};\tau}) &\geq I(X;A) + H(X,Y|A),\label{nceq3}
\end{align}
\end{subequations}
where the joint distribution of $(X,A,Y)$ is of the form \eqref{joint}, under which $E\left[\Lambda(A)\right]\leq \Gamma$.
\end{theorem}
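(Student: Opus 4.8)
The plan is to prove the two directions separately: converse (generalized cut-set bounds) and achievability (via RLNC combined with the coding scheme for case B).

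For the converse, I would show that any achievable rate vector $(c_l)_{l\in\mathcal{E}}$ must satisfy \eqref{nceq1}--\eqref{nceq3}. Fix a terminal $t\in\tau$. The key is that for each of the three source subsets $\{s_1\}$, $\{s_2\}$, $\{s_1,s_2\}$, and each cut $\mathcal{V}_{A;t}$ separating that subset from $t$, the total information crossing the cut must be enough to reconstruct the relevant source(s) at $t$. Concretely, by the cut-set argument the bits flowing across the cut, $n\,\mathcal{C}(\mathcal{V}_{A;t})$ (up to $n\epsilon$ terms), upper bound $H$ of the messages on those edges, which in turn must be at least: $H(X^n,Y^n|\text{messages from the } t\text{-side})$ for the $\{s_1,s_2\}$ cut; $H(Y^n \mid X^n,\text{other side})$ for the $\{s_2\}$ cut; and the $X^n$-description term for the $\{s_1\}$ cut. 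The subtlety specific to this model is that the source $Y^n$ is itself generated \emph{through the action $A^n$}, which is a function of $X^n$ only; so when I apply Fano and standard entropy manipulations I must carefully introduce $A^n$ as an auxiliary and use the Markov structure $A^n - X^n - (\text{messages on }s_2\text{'s side})$ together with the memoryless channel \eqref{setup:memoryless}. Single-letterization then follows the usual route: introduce a time-sharing variable, invoke convexity of the region, and verify that the resulting $(X,A,Y)$ has the form \eqref{joint} with $E[\Lambda(A)]\le\Gamma$. The $I(X;A)-I(Y;A)$ appearing in \eqref{nceq1} arises because the $Y^n$-carrying messages on the far side of the $\{s_1\}$-cut already convey roughly $nI(Y;A)$ bits about the action, so $s_1$ need only describe the residual; this mirrors exactly the gap between Theorems \ref{theorem:dec} and \ref{theorem:enc}.

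For achievability, I would first have node $s_1$ apply the strategy $h$ to $X^n$ to produce a typical action sequence $A^n$, generating $Y^n$ at $s_2$. Then $s_1$ and $s_2$ run the random-binning / RLNC scheme: each creates linear combinations of (a description of) its source, with $s_1$ emitting at rate matching the right-hand side of \eqref{nceq1} and $s_2$ at rate matching \eqref{nceq2}, and intermediate nodes forwarding random linear combinations of their inputs. Each terminal $t$ collects the incoming combinations and solves for $(X^n,Y^n)$; decoding succeeds with high probability provided the min-cut conditions hold, by a union bound over incorrect source pairs. This is where Lemma \ref{lemma:bounds} is essential: I need to bound the probability that a wrong pair $(\tilde X^n,\tilde Y^n)$ — or a wrong pair that is jointly typical with some action $\tilde A^n$ consistent with the mutual-information constraints — produces the same received vector at $t$ as the true pair, and Lemma \ref{lemma:bounds} supplies exactly the needed bound on collision probability through a network, now applied not to a plain source but to the \emph{binned/typicality-restricted} source, which is why the non-multicast, mutual-information-flavored terms appear. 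Counting the number of confusable sequences of each type (those differing only in the $X$-part, only in the $Y$-part, or in both, weighted by the appropriate conditional-typicality cardinalities involving $A$) against the corresponding cut capacity gives the three conditions.

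The main obstacle I anticipate is the achievability collision analysis: one must partition the error events at a terminal according to which sources are "correctly decoded," bound the number of competing sequences in each class using joint typicality with the auxiliary $A^n$ (this is where $I(X;A)$ and $I(Y;A)$ enter the exponents rather than pure conditional entropies), and then invoke Lemma \ref{lemma:bounds} with the correct "effective source" so that the per-terminal min-cut capacity dominates the corresponding exponent — simultaneously for all $t\in\tau$, which forces the use of $\mathcal{C}(\mathcal{V}^{\ast}_{A;\tau})=\min_{t}\mathcal{C}(\mathcal{V}^{\ast}_{A;t})$. The converse, by contrast, is comparatively routine once the auxiliary $A^n$ is inserted at the right places and the Markov relations are tracked.
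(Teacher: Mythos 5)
Your overall architecture coincides with the paper's: a converse via generalized cut-set bounds (Fano, insertion of $A^n$ as a deterministic function of $X^n$, the memoryless property \eqref{setup:memoryless}, then single-letterization), and an achievability via random binning plus RLNC with joint-typicality decoding at each terminal, using Lemma \ref{lemma:bounds} to bound collision probabilities and a union bound over terminals. The converse sketch is sound and matches the paper's derivation. However, there is a genuine gap in your achievability plan: it does not explain how to attain \eqref{nceq1} when $I(X;A)-I(Y;A)<0$. In that regime the right-hand side of \eqref{nceq1} is \emph{strictly smaller} than $H(X|Y,A)$, whereas the analysis you describe --- count the $\tilde{X}^n$ sequences jointly typical with the decoded $(Y^n,A^n)$ and compare against $\mathcal{C}(\mathcal{V}^{\ast}_{s_1;t})$ --- can never require less than $H(X|Y,A)$ across the $s_1$ cut, since there are $2^{nH(X|Y,A)}$ such confusable candidates. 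Your remark that the $I(X;A)$, $I(Y;A)$ terms ``enter the exponents via joint typicality with $A^n$'' only accounts for the case $I(X;A)\geq I(Y;A)$, where the surplus $2^{n(I(X;A)-I(Y;A))}$ counts the action codewords still jointly typical with $Y^n$.

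The paper closes this gap with a specific construction you would need to supply: when $I(X;A)<I(Y;A)$, the action codebook is generated at the larger rate $I(Y;A)-\epsilon$ and randomly binned at rate $\Delta=I(Y;A)-I(X;A)-2\epsilon$, the encoder at $s_1$ selects an action codeword from the bin indexed by the first $n\Delta$ bits of the $X^n$-bin index (so that $\mathcal{B}_{\vA}=\mathcal{B}_{\vX}$ by construction), and the decoder exploits the constraint $\mathcal{B}_{\vA}=\mathcal{B}_{\vtX}$: having recovered the action from $Y^n$, it learns $n\Delta$ bits of the $X$-bin for free, cutting the number of confusable $\tilde{X}^n$ down to $2^{n(H(X|Y,A)-\Delta)}=2^{n(I(X;A)-I(Y;A)+H(X|Y,A)+2\epsilon)}$. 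Without this (or an equivalent mechanism), your scheme proves only a strictly smaller achievable region in that regime, and the converse and direct parts would not meet. A secondary point worth making explicit, which you use implicitly, is that the bound of Lemma \ref{lemma:bounds} is independent of the number of field elements injected at a source node; this is what permits treating the pair $(\IX,\IA)$ as a single input at $s_1$ so that errors involving both a wrong $\tilde{X}^n$ and a wrong $\tilde{A}^n$ are still governed by the single cut capacity $\mathcal{C}(\mathcal{V}^{\ast}_{s_1;t})$.
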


Note, the network investigated here is an extension of case B. The network setting is reduced to case B by substituting $\mathcal{V}=\{s_{1},s_{2},t\}$ and $\mathcal{E}=\{(s_{1},t),(s_{2},t)\}$. Therefore, the right hand side of \eqref{NC:theoremeq} coincides with the information measurements in Theorem \ref{theorem:enc}.

\section{Examples}\label{section:examples}
In this section, we study two binary examples and derive the optimal rate regions $\mathcal{R}_{A}$, $\mathcal{R}_{B}$. For comparison, we also study a special scenario for which actions are taken before the the first source $X^{n}$ is known, and actions play the role of time-sharing random variable. This special scenario may seem a degenerate setup, but can lead to some insights when considering an implementation of such a system with actions. The first example illustrates a scenario where actions taken at different nodes of the system cannot affect the set of achievable rates, while the second example demonstrates how taking actions at different nodes of the system improve significantly the optimal rate region under a cost regime.
\begin{example}
This binary example illustrates a sensors' measurements transmission; $X$ and $Y$ are two measurements known at different nodes of the system. The measurement $X$ is a coarse measurement which is binary and distributed uniformly, while the measurement $Y$ corresponds to fine or coarse measurement depends on the taken actions. A low-cost actions correspond to a fine measurement within the measured range, and high-cost actions correspond to a coarse measurement identical to the $X$ measurement. This cost implies that the number of fine measurements needs to be above some threshold. Our goal is to characterize the rates that are required in order to know both measurements at the decoder under a cost regime.

The example is illustrated in Fig. \ref{fig:Ex2}; consider a binary case where $\mathcal{X}=\mathcal{Y}=\mathcal{A}=\{0,1\}$, and $X\sim Bern(.5)$. Let $Y$ be an output of a clean channel if $A=0$, and the output of a noisy-channel with crossover probability $0.5$ if $A=1$. Actions can be taken at the decoder (switch $1$ is closed), at the encoder (switch $2$ is closed) or in the special case of actions taken before the source $X$ is known (switch $1$ and switch $2$ are open). We consider a cost function $\Lambda(A)=A$ that induces $P(A=1) \leq \Gamma$
\begin{figure}[h]
\centering
\begin{psfrags}
    \psfragscanon
    \psfrag{A}[][][1]{$X\sim Bern(.5)$}
    \psfrag{B}[][][1]{$X$}
    \psfrag{C}[][][1]{Encoder}
    \psfrag{D}[][][1]{Decoder}
    \psfrag{E}[][][1]{$R_{X}$}
    \psfrag{F}[l][][1]{$R_{Y}$}
    \psfrag{G}[][][1]{A}
    \psfrag{H}[][][.7]{Fine measurement}
    \psfrag{I}[][][.7]{Coarse measurement}
    \psfrag{J}[][][1]{$0$}
    \psfrag{K}[][][1]{$1$}
    \psfrag{L}[cb][][1]{$Y$}
    \psfrag{M}[][][1]{$\hat{X},\hat{Y}$}
    \psfrag{N}[][][.7]{$0.5$}
    \psfrag{O}[][][0.6]{$1-\delta$}
    \psfrag{Z}[][][0.7]{$A=0$}
    \psfrag{X}[][][0.7]{$A=1$}
    \psfrag{S}[l][][0.7]{switch $1$}
    \psfrag{W}[l][][0.7]{switch $2$}
\includegraphics[height=6cm]{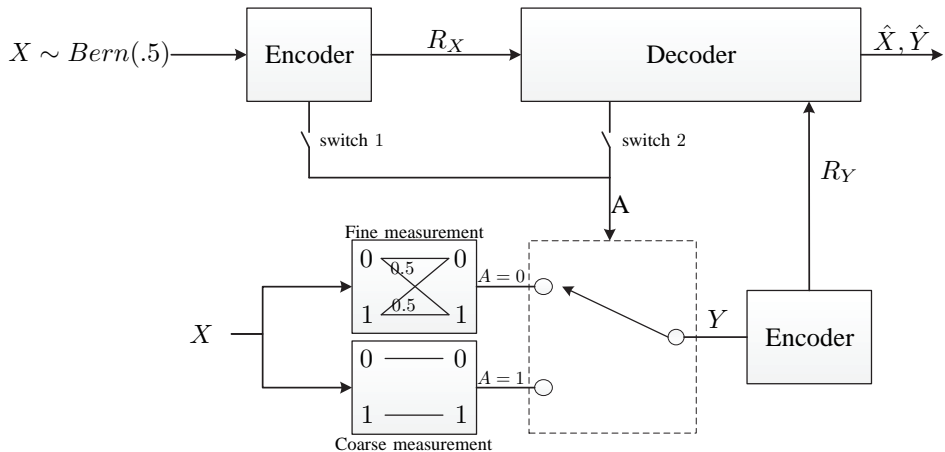}
\psfragscanoff
\end{psfrags}
\caption{The setup for example $1$. Actions can be performed by the decoder (switch $2$ is closed), by the encoder (switch $1$ is closed) or before $X$ is known (switch $1$ and switch $2$ are open). The switch in the dashed box corresponds to actions' performance.}
\label{fig:Ex2}
\end{figure}

\begin{itemize}
\item
Case A - actions are taken at the decoder; the setup is depicted in Fig. \ref{fig:Ex2}, with switch $1$ closed. A general conditional distribution connecting $X$ and $A$ is considered, with $P_{A|X}(1|0)=\alpha$ and $P_{A|X}(0|1)=\beta$. The optimal rate region, $\mathcal{R}_{A}$, is as follows:
\begin{align}
    R_{X} &\geq 1 - 0.5(\alpha+\bar{\beta})H_{b} (\frac{\alpha}{\alpha+\bar{\beta}}),\nonumber\\
    R_{Y} &\geq  0.5(\bar{\alpha}+\beta),\nonumber\\
    R_{X}+R_{Y} &\geq 1 + 0.5(\bar{\alpha}+\beta),
\end{align}
for some $\alpha,\beta \in[0,1]$ such that $0.5(\alpha+\bar{\beta}) \leq \Gamma$.
\item Case B - actions are taken at the encoder; the setup is depicted in Fig. \ref{fig:Ex2}, with switch $2$ closed. Calculating $\mathcal{R}_{B}$ with the same pmf as in the previous case yields:
\begin{align}
    R_{X} &\geq 1 - H_{b}(0.5\alpha + 0.25[\beta + \bar{\alpha}]) + 0.5(\bar{\alpha}+\beta),\nonumber\\
    R_{Y} &\geq  0.5(\bar{\alpha}+\beta),\nonumber\\
    R_{X}+R_{Y} &\geq 1 + 0.5(\bar{\alpha}+\beta),
\end{align}
for some $\alpha,\beta \in[0,1]$ such that $0.5(\alpha+\bar{\beta}) \leq \Gamma$.
\item Case C - Actions are taken before the source $X^{n}$ is known - for this case, actions contain no information of the source $X^{n}$ and play the role of a time-sharing random variable available to the system. Definitions of the probability of error, achievable rate pair and the optimal rate region, denoted by $\mathcal{R}_{A\bot X}$, remain as in the previous cases. For this scenario, it can be shown that the optimal rate region $\mathcal{R}_{A\bot X}$ is the set of $(R_{X},R_{Y},\Gamma)$ such that:
\begin{align}\label{RegionC}
    R_{X} \geq H(X|Y,A)\nonumber,\\
    R_{Y} \geq H(Y|X,A)\nonumber,\\
    R_{X}+R_{Y} \geq H(X,Y|A),
\end{align}
for some joint distribution $P_{X,A,Y}=P_{X}P_{A}P_{Y|A,X}$, under which $E\left[\Lambda(A)\right]\leq \Gamma$.

The setup is depicted in Fig. \ref{fig:Ex2} where both switches are open; we assume that $X\sim Bern(\bar{\alpha})$ and the optimal rate region, $\mathcal{R}_{A\bot X}$, is as follows:
\begin{align}
    R_{X} &\geq \bar{\alpha}, \nonumber\\
    R_{Y} &\geq \bar{\alpha}, \nonumber \\
    R_{X}+R_{Y} &\geq 1 + \bar{\alpha},
\end{align} for some $\alpha \geq \Gamma$.
\end{itemize}

Remarkably, the unions over the three regions coincide for any value of $\Gamma$. Let us provide the coding scheme for minimizing the regions; substitute $\alpha=\Gamma$ and $\bar{\beta}=\Gamma$ (which satisfies the cost constraint) so that all the three regions are then minimized. The minimized region for three cases as a function of the cost, $\Gamma$, is then:
\begin{align}
    R_{X}& \geq 1-\Gamma, \nonumber\\
    R_{Y}& \geq 1-\Gamma, \nonumber\\
    R_{X}+R_{Y}& \geq 2-\Gamma.
\end{align}

This equivalence can happen in systems for which greedy policy is optimal. A greedy policy is associated with a system for which different observations of $X$ lead to the same actions strategy. For instance, in example $2$ greedy policy implies $A=1$ which yields more correlation between $X$ and $Y$ and thus a greater achievable region. Note that this policy has no dependence on the source $X$, and we are constrained only by the cost $\Gamma$.
\end{example}
\begin{example}
The example is depicted in Fig. \ref{fig:Ex1}; we consider the previous example but with a different channel characterization of the source $Y$. Let $Y$ be an output of a Z-channel with crossover probability $\delta$ if $A=0$, and the output of an S-channel with crossover probability $\delta$ if $A=1$. Again, actions can be taken at the decoder (switch $1$ is closed), at the encoder (switch $2$ is closed) or in the case that actions are taken before the source $X$ is known (switch $1$ and switch $2$ are open). We consider a cost function $\Lambda(A)=A$ which induces $P(A=1) \leq \Gamma$.
\begin{figure}[h!]
\centering
\begin{psfrags}
    \psfragscanon
    \psfrag{A}[][][1]{$X\sim Bern(.5)$}
    \psfrag{B}[][][1]{$X$}
    \psfrag{C}[][][1]{Encoder}
    \psfrag{D}[][][1]{Decoder}
    \psfrag{E}[][][1]{$R_{X}$}
    \psfrag{F}[l][][1]{$R_{Y}$}
    \psfrag{G}[][][1]{A}
    \psfrag{H}[][][.9]{Z-Channel}
    \psfrag{I}[][][.9]{S-Channel}
    \psfrag{J}[][][1]{$0$}
    \psfrag{K}[][][1]{$1$}
    \psfrag{L}[cb][][1]{$Y$}
    \psfrag{M}[][][1]{$\hat{X},\hat{Y}$}
    \psfrag{N}[][][1]{$\delta$}
    \psfrag{O}[][][0.6]{$1-\delta$}
    \psfrag{Z}[][][0.7]{$A=0$}
    \psfrag{X}[][][0.7]{$A=1$}
    \psfrag{S}[l][][0.7]{switch $1$}
    \psfrag{W}[l][][0.7]{switch $2$}
\includegraphics[height=6cm]{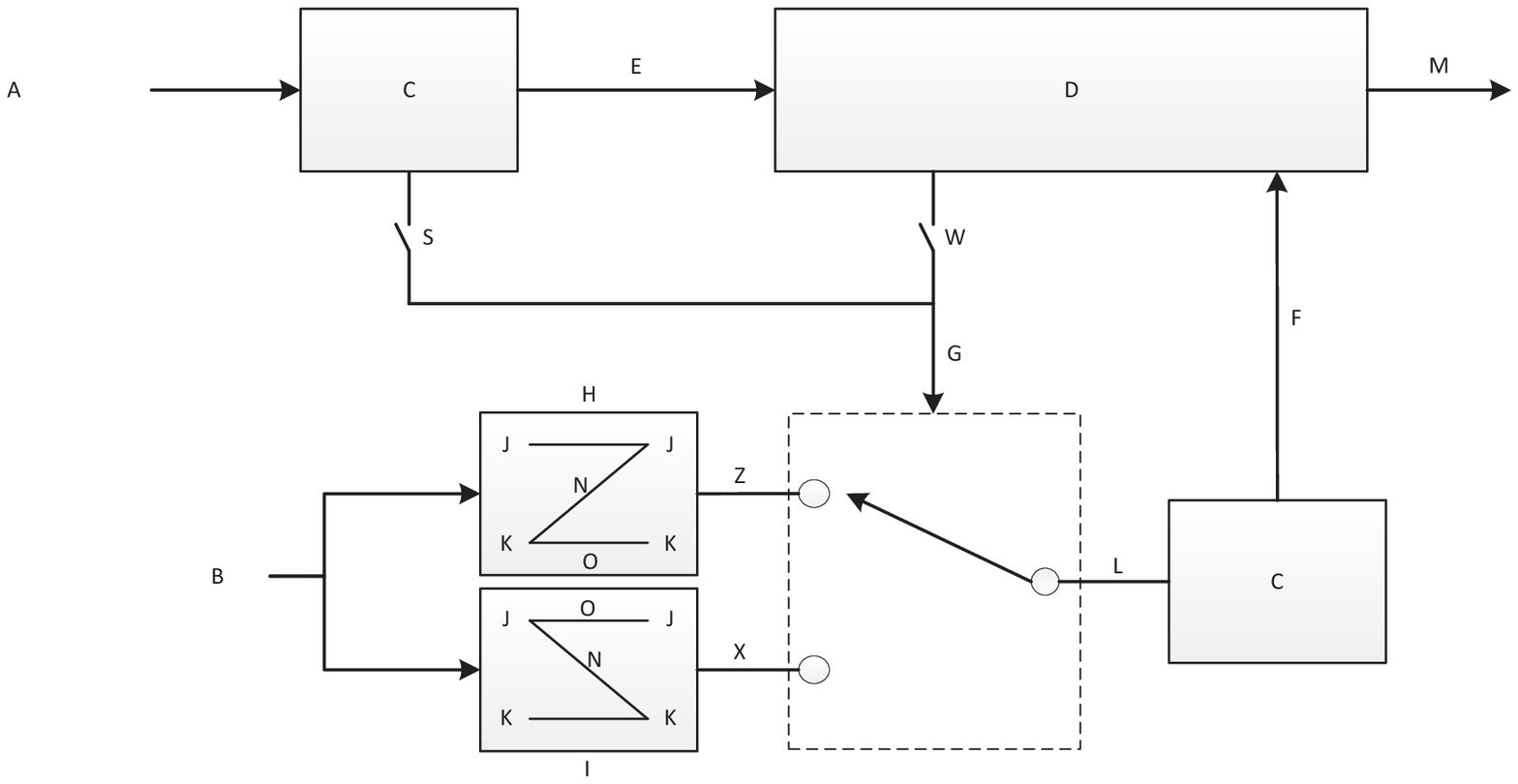}
\psfragscanoff
\end{psfrags}
\caption{The setup for example $2$. Actions can be taken at the decoder (switch $2$ is closed ), at the encoder (switch $1$ is closed) or before $X$ is known (switch $1$ and switch $2$ are open). The switch in the dashed box corresponds to actions' performance.}
\label{fig:Ex1}
\end{figure}

\begin{itemize}
\item
Case A - actions are taken at the decoder; the setup is depicted in Fig. \ref{fig:Ex1} for the case that switch $2$ is closed. A general conditional distribution connecting $X$ and $A$ is considered, with $P_{A|X}(1|0)=\alpha$ and $P_{A|X}(0|1)=\beta$. The optimal rate region, $\mathcal{R}_{A}$, is as follows:

\begin{align}
    R_{X} &\geq 1 - 0.5(\alpha+\bar{\beta})H_{b}(\frac{\bar{\beta}}{\alpha+\bar{\beta}})- 0.5(\beta+\bar{\alpha})H_{b}(\frac{\bar{\alpha}}{\beta+\bar{\alpha}}){\nonumber}\\& \hspace{3mm}+ 0.5(\bar{\alpha}+\beta\delta)H_{b}(\frac{\bar{\alpha}}{\bar{\alpha}+\beta\delta}) + 0.5(\bar{\beta}+\alpha\delta)H_{b}(\frac{\bar{\beta}}{\bar{\beta}+\alpha\delta}),\nonumber\\
    R_{Y} &\geq  0.5(\alpha+\beta)H_{b}(\delta), \nonumber\\
    R_{X}+R_{Y} &\geq 1 + 0.5(\alpha+\beta)H_{b}(\delta),
\end{align}
for some $\alpha,\beta \in[0,1]$ such that $0.5(\alpha+\bar{\beta}) \leq \Gamma$ and $\bar{\alpha}$ stands for $1-\alpha$.

\item Case B - actions are taken at the encoder; the setup is depicted in Fig. \ref{fig:Ex1} for the case that switch $1$ is closed. A conditional distribution is assumed as in case $A$. The optimal rate region ,$\mathcal{R}_{B}$, for this case is as follows:
\begin{align}
    R_{X} &\geq 1 + 0.5(\alpha+\beta)H_{b}(\delta) - H_{b}(0.5[1 + \alpha\delta - \beta\delta]),\nonumber\\
    R_{Y} &\geq 0.5(\alpha+\beta)H_{b}(\delta),\nonumber\\
    R_{X}+R_{Y} &\geq 1 + 0.5(\alpha+\beta)H_{b}(\delta),
\end{align}
for some $\alpha,\beta \in[0,1]$ such that $0.5(\alpha+\bar{\beta}) \leq \Gamma$.

Note, the optimal rate region $\mathcal{R}_{B}$ is minimized by taking $A=X$ for the case of $\Gamma\geq0.5$.
\item Case C - actions are taken before the source $X^{n}$ is known; the setup is depicted in Fig. \ref{fig:Ex1} where both switches are open. The optimal rate region, $\mathcal{R}_{A\bot X}$, for example $2$ is:
\begin{align}
    R_{X} &\geq 0.5(1+\delta)H_{b}(\frac{1}{1+\delta}),\nonumber\\
    R_{Y} &\geq 0.5H_{b}(\delta),\nonumber\\
    R_{X}+R_{Y} &\geq 1 + 0.5 H_{b}(\delta).
\end{align}
\end{itemize}
Note that the region is independent of $\alpha$ and no union is needed here. This fact implies that $\mathcal{R}_{A\bot X}$ is also independent of the cost $\Gamma$ and only depends on the value of $\delta$.

To gain some intuition regarding the optimal rate regions, we draw the results for $\Gamma=0.3$ and $\delta=0.5$ in Fig. \ref{fig:Results_Ex}. Let us examine the curved dashed blue line, which corresponds to case A; its corner point coincides with the black line (squared-marker) and tends to the red line (triangled-marker) in different parts of the region. For a high $R_{X}$, an action is transmitted explicitly within $R_X$ and induces high correlation with the source $X$. Decreasing $R_{X}$ implies that $P_{A|X}$ induces the action to be less correlative with $X$ and, therefore, tends to the region $\mathcal{R}_{A\bot X}$. Nevertheless, the blue plot achieves better performance in $R_X$ than the red plot, which implies that correlation is required to achieve minimum $R_X$. Clearly, case A and case B have greater optimal region than the case of actions independent of $X^{n}$, thus time-sharing is not optimal when investigating an action-dependent system.

\begin{figure}[h!]
\centering
\begin{psfrags}
    \psfragscanon
    \psfrag{X}[ct][][1]{$R_{X}[Bits/Symbol]$}
    \psfrag{Y}[cb][][1]{$R_{Y}[Bits/Symbol]$}
    \psfrag{B}[l][][.9]{Case C - Actions are taken before $X^{n}$ is known}
    \psfrag{D}[l][][.9]{Case A - Actions are taken at the decoder}
    \psfrag{E}[l][][.9]{Case B - Actions are taken at the encoder}
    \psfrag{Q}[c][][0.8]{$\Gamma=0.5,\delta=0.5$}
    \psfrag{W}[c][][0.8]{$\Gamma=0.5,\delta=0.3$}
    \psfrag{T}[c][][1.1]{Comparison for $\Gamma=0.3,\delta=0.5$}
    \psfrag{U}[c][][0.8]{$\Gamma=0.3,\delta=0.3$}
\includegraphics[width=16cm]{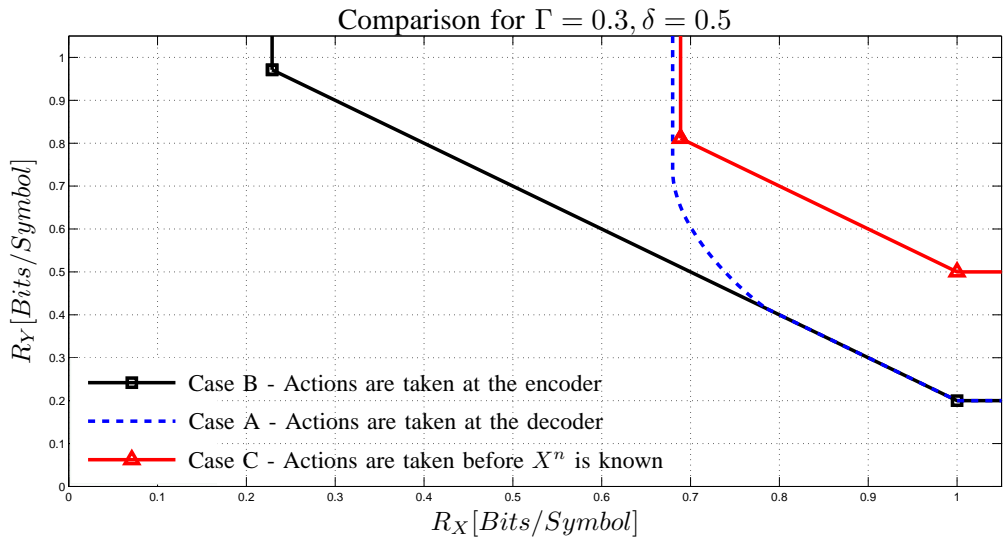}
\psfragscanoff
\end{psfrags}
\caption{The optimal rate regions for three cases of Example $2$.}
\label{fig:Results_Ex}
\end{figure}
\end{example}
\section{Proofs of Case A and Case B}\label{section:proofs}
In this section we present the proof of Theorem \ref{theorem:dec} and Theorem \ref{theorem:enc}. As mentioned in Section \ref{section:main}, the network model that was studied in Theorem \ref{theorem:network} can be reduced to case B under certain conditions. Thus, the converse of Theorem \ref{theorem:enc} is omitted here and can be followed directly from the converse of Theorem \ref{theorem:network}, which is provided in Section \ref{section_network}. However, we provide an alternative achievability proof, which is less complicated than the direct method of Theorem \ref{theorem:network}.
\subsection{Proof of Theorem \ref{theorem:dec}}\label{proof:dec}
\textbf{Sketch of Achievability:}
At the first stage, the identity of the action sequence is transmitted from encoder $1$ to the decoder; generate a codebook of actions containing $2^{nI(X;A)}$ independent codewords, where each codeword is generated according to $P_{A}$. Encoder $1$ looks in the codebook for a codeword which is jointly typical with the source observation $x^n$, and transmits this codeword to the decoder using a rate of $I(X;A)$.
Note that the optimal rate region, \eqref{region:dec},can be written as:
\begin{align}\label{proof_dec_alternative}
    R_{X} - I(X;A) &\geq H(X|Y,A), \nonumber\\
    R_{Y} &\geq H(Y|X,A), \nonumber\\
    \left[R_{X} - I(X;A)\right] + R_{Y} &\geq H(X,Y|A).
\end{align}
Before proceeding to the last step of the proof, note that the triplet $(A^{n},X^{n},Y^{n})$ is jointly typical with high probability. The source $Y^{n}$ is an output of a memoryless channel that is conditioned on the pair $(X^{n},A^{n})$; this pair is jointly typical with high probability according to the covering lemma \cite[Chapter 3]{ElGamal}. Now, using the fact that the triplet is jointly typical, the right hand side of \eqref{proof_dec_alternative} is achieved by implementing a SW coding scheme, where actions are treated as SI available at the decoder.

\textbf{Converse:}

Assume that a sequence $(2^{nR_{X}},2^{nR_{Y}},n)$ of achievable codes exists.
For the rate that is used by encoder $1$, consider:
 \begin{align}
   nR_{X} &\geq H(T_{1}) \nonumber\\
    &\stackrel{(a)}= H(T_{1})+H(A^{n}|T_{1}) + H(X^{n}|Y^{n},T_{1}) - H(X^{n}|Y^{n},T_{1}) \nonumber\\
    &\stackrel{(b)}\geq H(A^{n})+H(T_{1}|A^{n})+H(X^{n}|Y^{n},T_{1})- n\epsilon_{n} \nonumber\\
    &\stackrel{(c)}\geq H(A^{n})+H(X^{n},T_{1}|A^{n},Y^{n})- n\epsilon_{n}\nonumber\\
    &= H(A^{n}) + H(X^{n}|A^{n},Y^{n}) + H(T_{1}|X^{n},A^{n},Y^{n})- n\epsilon_{n}\nonumber\\
    &\stackrel{(d)}= H(A^{n}) + H(X^{n}|A^{n},Y^{n})- n\epsilon_{n}\nonumber\\
    &\stackrel{(e)}= H(A^{n})-H(A^{n}|X^{n})+H(X^{n}|A^{n},Y^{n})- n\epsilon_{n}\nonumber\\
    &\stackrel{(f)}= H(X^{n})-H(Y^{n}|A^{n})+H(Y^{n}|A^{n},X^{n})- n\epsilon_{n}\nonumber\\
    &\stackrel{(g)}= \sum_{i=1}^{n} \left[ H(X_{i})-H(Y_{i}|Y^{i-1},A^{n}) +H(Y_{i}|A_{i},X_{i})\right]- n\epsilon_{n}\nonumber\\
    &\stackrel{(h)}\geq \sum_{i=1}^{n} \left[H(X_{i})-H(Y_{i}|A_{i})+H(Y_{i}|A_{i},X_{i})\right]- n\epsilon_{n}\nonumber\\
    &\stackrel{(i)}\geq \sum_{i=1}^{n} \left[I(X_{i};A_{i})+H(X_{i}|A_{i},Y_{i})\right]- n\epsilon_{n}, \label{eq_converse_dec1}
 \end{align}
where:
\begin{itemize}
  \item [(a)] follows from the fact that $A^{n}$ is a deterministic function of the index $T_{1}$;
  \item [(b)] follows from Fano's inequality and properties of joint entropy;
  \item [(c)] follows from the fact that conditioning reduces entropy;
  \item [(d)] follows from the fact that $T_{1}$ is a deterministic function of $X^{n}$;
  \item [(e)] follows from the fact that $A^{n}$ is a deterministic function of $X^{n}$;
  \item [(f)] follows from the properties of mutual information;
  \item [(g)] follows from the fact that $X^{n}$ is i.i.d. and the memoryless property \eqref{setup:memoryless};
  \item [(h)] follows from the fact that conditioning reduces entropy;
  \item [(i)] follows from the properties of mutual information.
\end{itemize}

For the rate that is used by encoder $2$:
 \begin{align}
   nR_{Y} &\geq H(T_{2}) \nonumber\\
    &\stackrel{(a)}\geq H(T_{2},Y^{n}|X^{n})-H(Y^{n}|T_{2},X^{n}) \nonumber\\
    &\stackrel{(b)}\geq H(T_{2},Y^{n}|X^{n}) -n\epsilon_{n}\nonumber\\
    &\stackrel{(c)}= H(Y^{n}|X^{n}) - n\epsilon_{n} \nonumber\\
    &\stackrel{(d)}= H(Y^{n}|X^{n},A^{n}) -n\epsilon_{n} \nonumber\\
    &\stackrel{(e)}= \sum_{i=1}^{n} \left[H(Y_{i}|A_{i},X_{i})\right] -n\epsilon_{n}, \label{eq_converse_dec2}
 \end{align}
 where:
\begin{itemize}
  \item [(a)] follows from the fact that conditioning reduces entropy;
  \item [(b)] follows from Fano's inequality;
  \item [(c)] follows from the fact that $T_{2}$ is a deterministic functions of $Y^{n}$;
  \item [(d)] follows from the fact that $A^{n}$ is deterministic functions of $X^{n}$;
  \item [(e)] follows from the memoryless property \eqref{setup:memoryless}.
\end{itemize}

The last converse is for the sum-rate of the encoders:
 \begin{align}
    n(R_{X}+R_{Y}) &\geq H(T_{1},T_{2}) \nonumber\\
    &= H(T_{1},T_{2},X^{n},Y^{n})-H(X^{n},Y^{n}|T_{1},T_{2}) \nonumber\\
    &\stackrel{(a)}\geq H(X^{n},Y^{n})+H(T_{1},T_{2}|X^{n},Y^{n})-n\epsilon_{n}\nonumber\\
    &\stackrel{(b)}= H(X^{n},Y^{n})-n\epsilon_{n}\nonumber\\
    &\stackrel{(c)}= H(X^{n})+H(Y^{n}|X^{n},A^{n})-n\epsilon_{n}\nonumber\\
    &\stackrel{(d)}= \sum_{i=1}^{n} \left[H(X_{i})+H(Y_{i}|X_{i},A_{i})\right] -n\epsilon_{n}\nonumber\\
    &= \sum_{i=1}^{n} \left[I(X_{i};A_{i})+H(X_{i},Y_{i}|A_{i})\right] -n\epsilon_{n}
    \end{align}
where:
\begin{itemize}
  \item [(a)] follows from Fano's inequality and the properties of joint entropy;
  \item [(b)] follows from the fact that $T_{1}$ and $T_{2}$ are deterministic functions of $X^{n}$ and $Y^{n}$, respectively;
  \item [(c)] follows from the fact that $A^{n}$ is a deterministic function of $X^{n}$;
  \item [(d)] follows from the fact that $X^{n}$ is memoryless and the memoryless property \ref{setup:memoryless}.
\end{itemize}
Derivation of the single letter terms is by using a standard time-sharing techinque.
Thus, we have shown the bounds:
 \begin{align}
    R_{X} &\geq I(X;A) + H(X|A,Y)- \epsilon_{n},\nonumber\\
    R_{Y} &\geq H(Y|A,X) -\epsilon_{n},\nonumber\\
    R_{X} + R_{y} &\geq I(X;A) + H(X,Y|A) -\epsilon_{n}.
 \end{align}
The proof is completed by taking $n\rightarrow\infty$, which implies $\epsilon_{n}\rightarrow 0$ since $(R_X,R_Y)$ are achievable.
\subsection{Achievability of Theorem \ref{theorem:enc}}\label{proof:enc}
The achievability proof is based on arguments of time sharing; namely, we prove the corner points of $\mathcal{R}_{B}$ to be achievable and conclude that the convex region is also achievable. Throughout the proof, we differentiate two cases according to the sign of the term  $I(X;A)-I(Y;A)$. The corner points of $\mathcal{R}_{B}$ are illustrated in Fig. \ref{fig:region_enc}, and can be written as:
\begin{align}
(R_X,R_Y) &= \left(I(X;A)-I(Y;A) + H(X|Y,A),H(Y)\right), \label{eq:enc_cp1}\\
(R_X,R_Y) &=  \left(H(X),H(Y|X,A)\right).\label{eq:enc_cp2}
\end{align}
\begin{figure}[h!]
\centering
\begin{psfrags}
    \psfragscanon
    \psfrag{A}[c][][.9]{$H(Y)$}
    \psfrag{B}[c][][.9]{$H(Y|X,A)$}
    \psfrag{C}[c][][.9]{$I(X;A)-I(Y;A)$}
    \psfrag{D}[c][][.9]{$+H(X|Y,A)$}
    \psfrag{E}[c][][0.8]{$H(X)$}
    \psfrag{X}[c][][1.1]{$R_{X}$}
    \psfrag{Y}[c][][1.1]{$R_{Y}$}
\includegraphics[width=8cm]{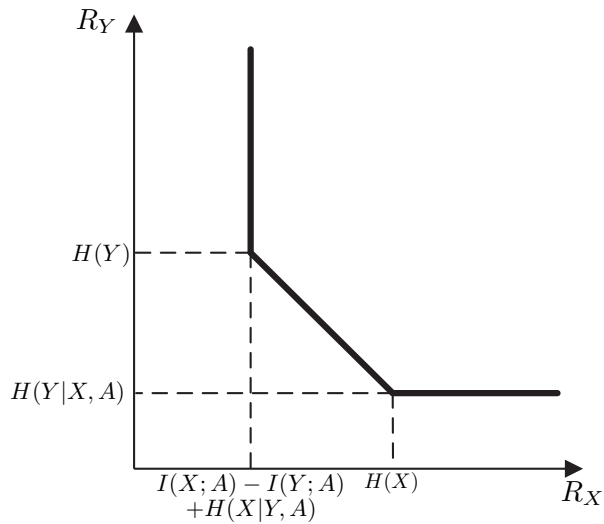}
\psfragscanoff
\end{psfrags}
\caption{The optimal rate region, $\mathcal{R}_B$, for case B.}
\label{fig:region_enc}
\end{figure}

The corner point in \eqref{eq:enc_cp1} can be achieved as follows; we first transmit the source sequence $Y^{n}$ in a lossless manner at a rate of $H(Y)$ to the decoder, then our problem reduces to that of \cite[Sec.\rom{3}]{PermuterWiessman_vending_11}-source coding with SI where actions are taken at the encoder. The proof for the rate $R_{X} = I(X;A)-I(Y;A) + H(X|Y,A)$ is omitted here, and can be found in \cite[Sec.\rom{3}]{PermuterWiessman_vending_11}.

%

The corner point in \eqref{eq:enc_cp2} is, indeed, the common corner point for case A and case B as mentioned in Section \ref{section:main}. The rate $R_X$ in \eqref{eq:enc_cp2} can be written also as $H(X)$; thus, this rate is used to transmit the source $X^{n}$ in a lossless manner to the decoder. Having received the source $X^{n}$, the decoder obtains $A^{n}$, which is a deterministic function of $X^{n}$. Later, a trivial source coding scheme for the source $Y^{n}$ is used at a rate of $H(Y|X,A)$, where $(X^{n},A^{n})$ are considered as SI available to the decoder.

\section{Proof of Theorem \ref{theorem:network}}\label{section_network}
In this section, a detailed proof for Theorem \ref{theorem:network} is provided. The code construction, encoding and the decoding procedures are presented in Subsection \ref{subsec:direct}, while the analysis of the probability of error will be given in Appendix \ref{app:analysis}. In Subsection \ref{subsec:lemma}, Lemma \ref{lemma:bounds} states an upper bound on the probability of error that two different inputs to a randomized network yield the same output, followed by a multicast example and the proof of the lemma. Finally, the proof of the converse for Theorem \ref{theorem:network} is given in Subsection \ref{subsec:converse}.
\subsection{Direct}\label{subsec:direct}
The direct part is based on RLNC in the finite field $\mathbb{F}_{2^{n}}$. Construction of the code comprises codebook generation of the actions codewords and, later on, random binning of the source sequences $\vX$ and $\vY$. The bins and the action codewords will then be the input to the network, but after representing each input as a vector of elements from $\mathbb{F}_{2^{n}}$. For the transmission in the network, we rely on the scalar algebraic approach introduced by Koetter and Medard \cite{scalar} and represent the linear mapping from inputs to the output in a terminal node as a matrix. Regarding the decoding procedure, in \cite{random} decoding was based on min-entropy or maximum a posteriori probability procedures. However, in our setup, the triplet $(\vX,\vA,\vY)$ is not distributed i.i.d. since actions are a function of the complete source sequence $\vX$; therefore, we adopt a strong typicality decoding procedure.

Throughout the direct proof, differentiation between two cases is based on the sign of the term $I(X;A)-I(Y;A)$. Since actions are functions of $X^{n}$, the generation rate of actions that is required to preserve joint typicality of the triplet $(\vX,\vA,\vY)$ is $I(X;A)$. When the sign of $I(X;A)-I(Y;A)$ is positive, we generate actions at a rate of $I(X;A)$ and choose the actions' sequence according to a joint typicality criteria with the observation of $X^{n}$. Then the term $I(X;A)-I(Y;A)$ corresponds to the rate that is required in order describe actions' sequence to a node that has access to $\vY$.
For a negative sign, $I(Y;A) - I(X;A)\geq 0$, actions contain more information of the source $Y$ than the source $X$. We exploit this fact by generating actions at rate of $I(Y;A)$, which is greater than the required generation rate, and randomly bin them at a rate of $I(Y;A)-I(X;A)$. It then follows that any node which has access to $Y^{n}$ can decode the actions, and thus finding the bin that contains the actions. The bin index is considered as a message, which is used to decrease the required minimum cut, $\mathcal{C}(\mathcal{V}^{\ast}_{1;t})$, and improve the achievable region.

\textbf{The case $I(X;A)-I(Y;A)\geq 0$:}
Fix a joint distribution of $P_{X,A,Y}=P_{X}P_{A|X}P_{Y|A,X}$, where the source distribution $P_{X}$ and $P_{Y|A,X}$ are given.

\underline{Code construction:}
\begin{itemize}
\item The $\vX$ sequences are randomly binned into $2^{nr_1}$ bins, where $r_1 \triangleq H(X)+\epsilon$, for some $\epsilon > 0$. Each bin can be represented as $nr_1$ bits, or alternatively as a vector of $\ceil{r_1}$ elements from the finite field $\mathbb{F}_{2^{n}}$. The bin vector of $X^{n}$ will be denoted as $\IX$, consisting of $\ceil{r_1}$ elements. The $\vY$ sequences are randomly binned into $2^{nr_2}$ bins, where $r_2 \triangleq H(Y)+\epsilon$. Again, the bin vector of the sequence $\vY$ will be denoted by $\IY$, consisting of $\ceil{r_2}$ elements from $\mathbb{F}_{2^{n}}$. The bin vectors $\IX$ and $\IY$ will be part of the input to the network.
\item A codebook $\mathcal{C}$ of actions codewords is generated, consisting of $2^{nr_A}$ independent codewords, $\vA(i)$, $i\in\{1,2,\dots,2^{nr_A}\}$, where each codeword is distributed i.i.d. according to $\sim\prod_{j=1}^{n} P_{A}(a_j)$. Each codeword $\vA$ is represented by a vector of elements from $\mathbb{F}_{2^{n}}$, denoted by $\IA$ and consisting of $\ceil{r_A}$ elements.
\item The inputs to the network will be the source bins $\IX$, $\IY$, and actions codewords $\IA$, each consisting of elements in $\mathbb{F}_{2^{n}}$. Each element in the input vectors $\IX$, $\IY$, and $\IA$ is denoted by $U_i$, where $i\in\{1,\dots,\ceil{r_1}+\ceil{r_2}+\ceil{r_A}\}$. Let $o(U_i)$ be equal to $s_1$ if $U_i$ is an element in the vector $\IX$ or $\IA$, and $o(U_i)=s_2$ if $U_i$ is an element in the vector $\IY$.
\end{itemize}

  The information process $V_{j}$ transmitted on a link $j\in\mathcal{E}$ is formed as a linear combination, in $\mathbb{F}_{2^{n}}$, of link $j$'s inputs, i.e. source elements, $U_{i}$, for which $o(U_i)=o(j)$ and input processes $V_{l}$ for which $d(l)=o(j)$. This can be represented by the equation
\begin{equation}
    V_{j}= \sum_{i: o(U_i)=o(j)} b_{i,j} U_{i} + \sum_{l:d(l)=o(j)} f_{l,j} V_{l}.
\end{equation}
The coefficients $\{b_{i,j},f_{l,j}\}$ are generated uniformly from the finite field $\mathbb{F}_{2^{n}}$ and collected into matrices $\mathbf{B}=\{b_{i,j}\}$ and $\mathbf{F}=\{f_{l,j}\}$; note the dimensions $|\mathbf{B}|=(\ceil{r_1}+\ceil{r_2}+\ceil{r_A})\times |\mathcal{E}|$, and $|\mathbf{F}|=|\mathcal{E}|\times|\mathcal{E}|$. For acyclic graphs, we can assume that there exists an ancestral indexing of the links in $\mathcal{E}$. It then follows that the matrix $\mathbf{F}$ is upper triangular with zeros on the diagonal and there exists the inverse of $(\mathbf{I}-\mathbf{F})$, denoted by $\mathbf{G} \triangleq (\mathbf{I}-\mathbf{F})^{-1}$. Let $\mathbf{G}_v$ denote the sub-matrix consisting of only the columns of $\mathbf{G}$ corresponding to the input links of node $v$. Now, we can write the complete mapping from the input vector of the network, e.g. $\underline{U}=[\IX, \IA, \IY]$, to the input processes of some terminal node $t$ as:
\begin{equation}
    \underline{Z}_{t} = [\IX, \IA, \IY]\net,
\end{equation}
where $\underline{Z}_{t}$ is a vector consisting of the processes $V_{j}$ satisfying $d(j)=\{t\}$.

\underline{Encoding:}
Given the source realization $\vx$, node $s_1$ looks in the codebook for an index $i$ such that $\vA(i)$ is jointly typical with $\vx$; if there is none it outputs $i=1$. If there is more than one index, $i$ is set to the smallest among them. The source $\vY$ is then generated and available at node $s_2$. The input to the network will then be the vector $[\Ix, \Ia, \Iy ]$, where $\Ix,\Iy$ are the bins' sources, and $\Ia$ is the chosen actions codeword.

\underline{Decoding:}
Having received the vector $\underline{Z}_{t}$, each node $t\in\tau$ looks for a unique triplet $(\vX,\vA,\vY)\in \strtyp(X,A,Y)$ satisfying $[\IX,\IA,\IY]\net=\underline{Z}_{t}$.

\textbf{The case $I(X;A) - I(Y;A)\leq 0$:}
Fix a joint distribution of $P_{X,A,Y}=P_{X}P_{A|X}P_{Y|A,X}$, where the source distributions $P_{X}$ and $P_{Y|A,X}$ are given.

\underline{Code construction:}
\begin{itemize}
\item Generate a codebook $\mathcal{C}$, consisting of $2^{n(I(Y;A)-\epsilon)}$ independent codewords, $\vA(i)$, $i\in\{1,\dots,2^{n(I(Y;A)-\epsilon)}\}$, where each element is i.i.d. $\sim\prod_{j=1}^{n} P_{A}(a_j)$, for some $\epsilon > 0$. Randomly bin the codewords in $\mathcal{C}$ into $2^{n\Delta}$ bins, where $\Delta=(I(Y;A) - I(X;A) - 2\epsilon)$, such that in each bin there are $2^{n(I(X;A)+\epsilon)}$ codewords. For each $\vA\in\mathcal{C}$, the bin that contains $\vA$ will be denoted as $\mathcal{B}_{\vA}$. Each bin can be represented by a message of $n\Delta$ bits, which is the rate that is reduced from the source $\vX$. Let $\IA$ denote the representation of each codeword by $\ceil{I(Y;A)-\epsilon}$ elements from $\mathbb{F}_{2^{n}}$.
\item The $\vX$ sequences are randomly binned into $2^{nr_1}$ bins, where $r_1 \triangleq H(X|Y,A)+\epsilon$. The notation $\mathcal{B}_{\vX}$ stands for the first $n\Delta$ bits of the bin index where $\vX$ falls. Additionally, each bin index is denoted by $\IX(j)$, $j \in \{1,\dots,2^{nr_1}\}$, consisting of $\ceil{r_1}$ elements from the finite field $\mathbb{F}_{2^{n}}$.
\item  The $\vY$ sequences are randomly binned into $2^{nr_2}$ bins, where $r_2 \triangleq H(Y)+\epsilon$. Each bin is represented by a vector consisting of $\ceil{r_2}$ elements from $\mathbb{F}_{2^{n}}$, and denoted by $\IY(k)$ $k \in \{1,\dots,2^{nr_2}\}$.
\item The process of network coefficients generation is the same as for the case $I(X;A)-I(Y;A)\geq 0$, and therefore omitted here.
\end{itemize}

\underline{Encoding:}
Given the source realization $\vx$, node $s_1$ looks in the actions' bin satisfying $\mathcal{B}_{\vA}=\mathcal{B}_{\vx}$ for a codeword $\vA$ which is jointly typical with $\vx$. The source $\vy$ is then generated and available at node $s_2$. The input to the network will then be the vector $[\Ix, \Ia, \Iy ]$ corresponding to the bins where the source sequences fall and the chosen actions codeword.

\underline{Decoding:}
Having received the vector $\underline{Z}_{t}$, each node $t\in\tau$ looks for a unique triplet $(\vX,\vA,\vY)\in \strtyp(X,A,Y)$ satisfying $[\vX, \vA, \vY]\net=\underline{Z}_{t}$ and $\mathcal{B}_{\vA}=\mathbf{I_{\vX}}$.

\subsection{An Upper Bound in Randomized Networks}\label{subsec:lemma}
Following the result in \cite[Appendix \rom{1}]{random}, the next lemma provides an upper bound on the probability of the event that two different inputs to a randomized linear network yield the same output at a terminal node $t$. Due to the fact that the network is linear, this event is equivalent to the event that the difference between two inputs yields the zero processes at the terminal node. The next lemma will be at the assist of our direct proof. Moreover, as we will see in Example. \ref{lemma:example}, it has implications beyond the scoop of our proof as well.

Let $\mathcal{G}=(\mathcal{V},\mathcal{E})$ be a directed, acyclic graph. The matrix $\net$ represents the complete mapping of the network from inputs to some node $t$, where each non-zero element in this matrix is generated uniformly from $\mathbb{F}_{2^{n}}$. Now, consider a set of sources with no incoming links, denoted by $\mathcal{S}\subseteq\mathcal{V}$, such that $\mathcal{S}=\{1,\dots,k\}$. Each node $i\in\mathcal{S}$ consists of a vector, $\underline{u}_i$, which comprises elements from $\mathbb{F}_{2^{n}}$. For any two different inputs to the network, denoted by $\underline{u}=[\underline{u}_1 \underline{u}_2 \dots \underline{u}_k]$ and $\underline{v}=[\underline{v}_1 \underline{v}_2 \dots \underline{v}_k]$, let $\mathcal{W}$ be a subset of $\mathcal{S}$, such that if $\underline{u}_i \neq \underline{v}_i$ then $i\in\mathcal{W}$.
\begin{lemma}\label{lemma:bounds}
For any pair of different inputs $\underline{u}$ and $\underline{v}$, the probability that these inputs induce the same output in node $t$ is bounded by:
\begin{align}\label{eq:lemma}
\Pr\left([\underline{u}-\underline{v}]\net=\mathbf{0}\right)&\leq \left(\frac{L}{2^n}\right)^{\mathcal{C}(\mathcal{V}^{\ast}_{\mathcal{W};t})},
\end{align}
where $L$ denotes the maximum source-receiver path length, and $\mathcal{C}(\mathcal{V}^{\ast}_{\mathcal{W};t})$ is the minimum cut-set between $\mathcal{W}$ and $t$.
\end{lemma}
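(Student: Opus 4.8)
The plan is to reduce the claim to a purely combinatorial--probabilistic statement about the \emph{difference} input, and then to extract the exponent from a max-flow/min-cut decomposition into edge-disjoint paths. Because the network map is $\mathbb{F}_{2^n}$-linear, the event $[\underline{u}-\underline{v}]\net=\mathbf{0}$ is exactly the event that the nonzero input $\underline{w}\triangleq\underline{u}-\underline{v}$, supported on the source nodes of $\mathcal{W}$, induces the all-zero vector of processes at $t$; write $V_e(\underline{w})$ for the process created on edge $e$ when $\underline{w}$ is injected (the primitive randomness being the i.i.d.\ uniform coding coefficients $\{b_{\cdot,e},f_{\cdot,e}\}$). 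We may and do take $\mathcal{W}$ to be exactly the set of coordinates where $\underline{u}$ and $\underline{v}$ differ, and, writing $m\triangleq\mathcal{C}(\mathcal{V}^{\ast}_{\mathcal{W};t})$, assume $m\geq1$ (the bound being vacuous otherwise); splitting links into parallel unit-capacity links lets us treat $m$ as an integer. By max-flow/min-cut there exist $m$ edge-disjoint directed paths $P_1,\dots,P_m$ from $\mathcal{W}$ to $t$, say $P_j=(e_j^{(1)},\dots,e_j^{(\ell_j)})$ with $\ell_j\leq L$. Setting $g_j\triangleq V_{e_j^{(\ell_j)}}(\underline{w})$, the event $[\underline{u}-\underline{v}]\net=\mathbf{0}$ forces $g_1=\dots=g_m=0$, so it suffices to bound $\Pr(g_1=\dots=g_m=0)$.

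The idea producing the exponent $m$ is to ask, along each path, \emph{at which hop the injected signal dies}. For each $j$ one has $\{g_j=0\}\subseteq\bigcup_{k=0}^{\ell_j-1}A_j^{(k)}$, where $A_j^{(0)}=\{V_{e_j^{(1)}}(\underline{w})=0\}$ and, for $1\leq k\leq\ell_j-1$, $A_j^{(k)}=\{V_{e_j^{(k)}}(\underline{w})\neq0,\ V_{e_j^{(k+1)}}(\underline{w})=0\}$: if the process never vanishes along $P_j$ then its value on the last edge, $g_j$, is nonzero. Consequently
\[
\Pr(g_1=\dots=g_m=0)\ \leq\ \sum_{(k_1,\dots,k_m)}\Pr\!\Big(\textstyle\bigcap_{j=1}^m A_j^{(k_j)}\Big),
\]
a sum of at most $\prod_j\ell_j\leq L^m$ terms, and it remains to show that each term is at most $q^{-m}$, $q\triangleq2^n$.

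Fix a tuple $(k_1,\dots,k_m)$ and reveal the coding coefficients edge by edge in a topological order. Call $e_j^{(k_j+1)}$ the \emph{trigger edge} of $A_j^{(k_j)}$; the $m$ trigger edges are pairwise distinct since $P_1,\dots,P_m$ are edge-disjoint. When the coefficients of $e_j^{(k_j+1)}$ are about to be revealed, the value $V_{e_j^{(k_j)}}(\underline{w})$ has already been fixed (it belongs to an earlier edge), and $V_{e_j^{(k_j+1)}}(\underline{w})$ is a non-constant affine function of the single fresh coefficient $f_{e_j^{(k_j)},e_j^{(k_j+1)}}$ (for $k_j=0$, of the pickup coefficient $b_{U^{\ast},e_j^{(1)}}$, where $U^{\ast}$ is a coordinate at that source with $w_{U^{\ast}}\neq0$, which exists because $\mathcal{W}$ is the exact support) \emph{provided} $V_{e_j^{(k_j)}}(\underline{w})\neq0$; if $V_{e_j^{(k_j)}}(\underline{w})=0$ then $A_j^{(k_j)}$ is already impossible. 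Either way, conditioned on everything revealed before that fresh coefficient, $\Pr(A_j^{(k_j)}\mid\cdot)\leq1/q$. Ordering the $m$ events by the topological position of their (distinct) trigger edges and peeling them off one at a time --- each earlier event being measurable with respect to the $\sigma$-algebra used to condition a later one --- gives $\Pr(\bigcap_j A_j^{(k_j)})\leq q^{-m}$, whence $\Pr(g_1=\dots=g_m=0)\leq L^m q^{-m}=(L/2^n)^{\mathcal{C}(\mathcal{V}^{\ast}_{\mathcal{W};t})}$.

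The step I expect to fight with is the last one. The $g_j$'s are \emph{not} independent: it is the coefficients, not the paths, that are randomized, so a signal can leak from one path into another, and one cannot simply multiply $m$ marginal Schwartz--Zippel-type estimates. The way around this is exactly the decomposition above --- break $\{g_j=0\}$ into the elementary hop-vanishing events $A_j^{(k)}$, each of which depends on the past only through \emph{one} fresh coefficient, and use edge-disjointness of $P_1,\dots,P_m$ to guarantee that within any fixed term the $m$ fresh coefficients are distinct and can be conditioned away sequentially in topological order. Getting this decomposition and conditioning order exactly right, and checking that the fresh coefficient is genuinely conditionally uniform at the relevant stage, is where the real care is needed; the linear reduction, the appeal to max-flow/min-cut, and the union bound over the $\le L^m$ tuples are routine.
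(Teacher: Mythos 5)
Your proof is correct, and it shares the paper's combinatorial skeleton --- extract $\mathcal{C}(\mathcal{V}^{\ast}_{\mathcal{W};t})$ edge-disjoint $\mathcal{W}$--$t$ paths from max-flow/min-cut, observe that a zero output at $t$ forces the difference signal to ``die'' at some hop on every such path, and charge a factor $L/2^n$ to each path --- but it executes the probabilistic step in a genuinely different way. The paper defines, for each disjoint path $\mathcal{P}_{\mathcal{G}_1 i}$, the event $E(\mathcal{P}_{\mathcal{G}_1 i})$ that the signal dies somewhere on that path, writes $\Pr\bigl(\bigcap_i E(\mathcal{P}_{\mathcal{G}_1 i})\bigr)=\prod_i\Pr\bigl(E(\mathcal{P}_{\mathcal{G}_1 i})\bigr)$ on the grounds that ``the coefficients are generated independently on each path,'' evaluates each factor as $1-(1-2^{-n})^{L_i}$, and finishes with Bernoulli's inequality. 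You instead refuse to assert any independence across paths --- correctly noting that the process carried on a link of one path is a function of coefficients that may also influence another path, so the per-path death events are not obviously independent and the per-link death probabilities are not obviously exactly $2^{-n}$ --- and replace the product by a union bound over the $\le L^m$ tuples recording \emph{where} the signal dies on each path, bounding each tuple's probability by $2^{-nm}$ via sequential conditioning on one fresh, conditionally uniform coefficient per (distinct, because edge-disjoint) trigger edge. What the paper's route buys is brevity and the slightly sharper intermediate expression $\bigl(1-(1-2^{-n})^{L}\bigr)^{\mathcal{C}(\mathcal{V}^{\ast}_{\mathcal{W};t})}$; what your route buys is rigor at exactly the point you flagged, since the independence and equality claims in the paper's steps (a) and in the identification $\Pr(E(\mathcal{G}_1))=\Pr([\underline{u}-\underline{v}]\net=\mathbf{0})$, which should be an inequality, are the heuristic links in its chain. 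Both arrive at the same final bound $\left(L/2^n\right)^{\mathcal{C}(\mathcal{V}^{\ast}_{\mathcal{W};t})}$, and your use of the exact support $\mathcal{W}$ to guarantee a nonzero pickup coefficient at the first hop matches the role $\mathcal{W}$ plays in the paper's definition of the downstream subgraph $\mathcal{G}_1$.
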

Note that the upper bound is independent of the number of elements in the vector $\underline{u}_i, \forall i$. This remarkable fact allows us to think of $\IA$ and $\IX$ as the same input in our network; thus, we have the same upper bound on two different probabilities in our analysis:
\begin{align}
\Pr\left([\Itx-\Ix, \Ita-\Ia, \mathbf{0}]\net=\mathbf{0}| \Itx\neq\Ix, \Ita\neq\Ia\right)&\leq  \left(\frac{L_1}{2^n}\right)^{\mathcal{C}(\mathcal{V}^{\ast}_{s_1;t})},\\
\Pr\left([\Itx-\Ix, \mathbf{0}, \mathbf{0}]\net=\mathbf{0}| \Itx\neq\Ix\right) &\leq  \left(\frac{L_1}{2^n}\right)^{\mathcal{C}(\mathcal{V}^{\ast}_{s_1;t})},
\end{align}
where $L_1$ is the maximum length of a path between $s_1$ and $t$.

We now show how the lemma above can serve as an easy and elegant proof to the capacity of multicast networks. A sender wishes to transmit a message to a set of terminal nodes through a directed, acyclic network. The sender transmits a message from the set $\mathcal{M}=\{1,\dots,2^{nR}\}$, and each receiver $t\in\tau$ is required to decode the correct message in a lossless manner. We want to characterize the single-letter expression for the maximal rate $R$ that can be used for a reliable communication in a given network.
\begin{example}[Multicast network]\label{lemma:example}
Consider a directed, acyclic network, where sender denoted as node $1$ is required to transmit a message from $\mathcal{M}=\{1,\dots,2^{nR}\}$ to a set of terminal nodes denoted as $\tau$. The sender can choose any message, $m\in\mathcal{M}$, and each receiver $t\in\tau$ is required to decode the correct message in a lossless manner. We provide here a simple $n$ block-length coding scheme follows by an analysis of the probability of error.

To encode the message, we rely on the scalar algebraic approach we have shown earlier in the code construction of the proof for Theorem \ref{theorem:network}. The input to the network is $\underline{m}$, where $\underline{m}$ is a vector representing $m$ by elements from $\mathbb{F}_{2^{n}}$. Each terminal node, $t\in\tau$, having received $\underline{z}_t$ looks for $m\in\mathcal{M}$ satisfying $\underline{m}\net=\underline{z}_t$.

Now, assume without loss of generality that the message $m$ was sent. An error occurs only if there exists $m'\neq m$ satisfying $\underline{m}'\net=\underline{z}_t$ for some $t\in\tau$.

Upper bounding the probability of error for some receiver $t\in\tau$ yields:
\begin{align}\label{lemma:example_eq}
    \Pr(\text{error})&= \Pr(\exists \tilde{m}\neq m:[\underline{\tilde{m}}- \underline{m}]\net=\mathbf{0})\nonumber\\
    &=\sum_{\tilde{m}\in\mathcal{M}}\Pr([\underline{\tilde{m}}-\underline{m}]\net=\mathbf{0})\nonumber\\
    &\leq 2^{nR}\left(\frac{L}{2^n}\right)^{\mathcal{C}(\mathcal{V}^{\ast}_{1;t})}\nonumber\\
    &= L^{\mathcal{C}(\mathcal{V}^{\ast}_{1;t})} 2^{n(R-\mathcal{C}(\mathcal{V}^{\ast}_{1;t}))}.
\end{align}
Note that if $R\leq\mathcal{C}(\mathcal{V}^{\ast}_{1;t})$, the term \eqref{lemma:example_eq} tends to zero for sufficiently large $n$. Our requirement is to decode the message correctly at all the receivers; thus, using the union-bound we achieve that the overall probability tends to zero for large $n$ if,
\begin{equation}
    R < \min_{t\in\tau}\mathcal{C}(\mathcal{V}^{\ast}_{1;t}),
\end{equation}
which is the known multicast result.
\end{example}
\begin{proof}[Proof of Lemma \ref{lemma:bounds}]
Let $\mathcal{G}_1$ be a subgraph of $\mathcal{G}$ consisting of all links downstream of $\mathcal{W}$, where a link $l$ is considered downstream if $o(l)\in \mathcal{W}$, or if there is a directed path from some source $s\in \mathcal{W}$ to $o(l)$. Since information sources can differ only in source nodes satisfying $i\in\mathcal{W}$, this fact induces that only links in $\mathcal{G}_1$ will affect the bound on probability.

Note that in a random linear network code, any link $l$ which has at least one nonzero input transmits the zero process with probability $2^{-nc_l}$, where $c_l$ is the capacity of $l$. This is the same as the probability that a pair of distinct values for the inputs of $l$ are mapped to the same output value on $l$.

For a given pair of distinct input values, let $E_l$ be the event where the corresponding inputs to link $l$ are distinct, but the corresponding values on $l$ are the same. Let $E(\mathcal{G}_1)$ be the event that $E_l$ occurs for some link $l$ on every source-terminal path in graph $\mathcal{G}_1$. Note, the probability of the event $E(\mathcal{G}_1)$ is equal to the probability that two inputs induce the same output at the terminal node, i.e. $\Pr([\underline{u}-\underline{v}]\net=\mathbf{0})$.

We proceed and look at the set of source-terminal paths in the graph $\mathcal{G}_1$. Since there exists $C(\mathcal{V}^{\ast}_{\mathcal{W};t})$ disjoint paths, we denote each disjoint path as $\mathcal{P}_{\mathcal{G}_1i}$ with its corresponding length $L_i$, where $i\in\{1,\dots,C(\mathcal{V}^{\ast}_{\mathcal{W};t})\}$. Furthermore, we denote $E(\mathcal{P}_{\mathcal{G}_1i})$ as the event that $E_l$ occurs for some link on $\mathcal{P}_{\mathcal{G}_1i}$.
\begin{align}\label{eq:lemmaproof}
    \Pr(E(\mathcal{G}_1))&= \Pr \left(\bigcap_{i=1}^{C(\mathcal{V}^{\ast}_{\mathcal{W};t})} E(\mathcal{P}_{\mathcal{G}_1i})\right)\nonumber\\
                         &\stackrel{(a)}= \prod_{i=1}^{C(\mathcal{V}^{\ast}_{\mathcal{W};t})} \Pr(E(\mathcal{P}_{\mathcal{G}_1i}))\nonumber\\
                         &=\prod_{i=1}^{C(\mathcal{V}^{\ast}_{\mathcal{W};t})} 1-\left(1-\frac{1}{2^n}\right)^{L_i}\nonumber\\
                         &\stackrel{(b)}\leq \prod_{i=1}^{C(\mathcal{V}^{\ast}_{\mathcal{W};t})} 1-\left(1-\frac{1}{2^n}\right)^{L}\nonumber\\
                         &=  \left(1-\left(1-\frac{1}{2^n}\right)^{L}\right)^{C(\mathcal{V}^{\ast}_{\mathcal{W};t})}\nonumber\\
                         &\stackrel{(c)}\leq \left(\frac{L}{2^n}\right) ^ {C(\mathcal{V}^{\ast}_{\mathcal{W};t})},
\end{align}
where:
\begin{itemize}
  \item [(a)] follows from the fact that the coefficients are generated independently on each path;
  \item [(b)] follows from $L=\max_{i} L_i$;
  \item [(c)] follows from applying Bernoulli's inequality, i.e. $(1 + x)^{r} \geq 1 + rx$, with substituting $x = -\frac{1}{2^n}$ and  $r = L$.
\end{itemize}
\end{proof}

\subsection{Generalized Cut-Set Bounds (Converse)}\label{subsec:converse}
In this subsection, we derive an outer bound on the set of achievable rates for our model. The outer bound is indeed a generalization of the known cut-set bound, this method of generalized cut-set bounds was adopted also in \cite{Asaf_networks_SI}.

For the converse of Theorem \ref{theorem:network}, given an achievable $\left(\left(2^{nR_{l}}\right)_{l\in\mathcal{E}},n\right)$ source code we need to show that there exists a joint distribution, $P_{X,A,Y}=P_{X}P_{A|X}P_{Y|X,A}$, such that the inequalities in Theorem \ref{theorem:network} hold.

For any set of messages denoted by $\mathcal{M}_1$, across a cut $\mathcal{V}_{s_1;t}$, we have
\begin{align}
n \mathcal{C}(\mathcal{V}_{s_1;t})&\geq H(\mathcal{M}_1)\nonumber\\
    &= H(\mathcal{M}_1) + H(X^{n}|Y^{n},\mathcal{M}_1) - H(X^{n}|Y^{n},\mathcal{M}_1)\nonumber\\
    &\stackrel{(a)}\geq H(\mathcal{M}_1) + H(X^{n}|Y^{n},\mathcal{M}_1) - n\epsilon_n\nonumber\\
    &\stackrel{(b)}\geq I(\mathcal{M}_1;X^{n},Y^{n}) + H(X^{n}|Y^{n},\mathcal{M}_1) - n\epsilon_n\nonumber\\
    &= H(X^{n},Y^{n}) - H(X^{n},Y^{n}|\mathcal{M}_1) + H(X^{n}|Y^{n},\mathcal{M}_1) - n\epsilon_n\nonumber\\
    &= H(X^{n}) + H(Y^{n}|X^{n}) - H(Y^{n}|\mathcal{M}_1) - n\epsilon_n\nonumber\\
    &\stackrel{(c)}\geq H(X^{n}) + H(Y^{n}|X^{n},A^{n}) - H(Y^{n}) - n\epsilon_n\nonumber\\
    &\stackrel{(d)}\geq \sum_{i=1}^{n} \left[ H(X_{i})-H(Y_{i}) + H(Y_{i}|A_{i},X_{i})\right]- n\epsilon_n\nonumber\\
    &= \sum_{i=1}^{n} \left[H(X_{i}) - H(X_{i}|A_{i})+ H(X_{i}|A_{i}) + H(Y_{i}|A_{i},X_{i}) - H(Y_{i})\right]- n\epsilon_n\nonumber\\
    &= \sum_{i=1}^{n} \left[I(X_{i};A_{i}) + H(Y_{i}|A_{i}) + H(X_{i}|A_{i},Y_{i}) - H(Y_{i})\right]- n\epsilon_n\nonumber\\
    &= \sum_{i=1}^{n} \left[I(X_{i};A_{i}) - I(Y_{i};A_{i}) + H(X_{i}|A_{i},Y_{i})\right] - n\epsilon_n,
\end{align}
where:
\begin{itemize}
  \item [(a)] follows from Fano's inequality;
  \item [(b)] follows from the fact that $\mathcal{M}_1$ is a deterministic function of $X^{n},Y^{n}$;
  \item [(c)] follows from the fact that $A^{n}$ is a deterministic function of $X^{n}$;
  \item [(d)] follows from the $X^{n}$ is memoryless, conditioning reduces entropy and the memoryless property \eqref{setup:memoryless}.
\end{itemize}
For the second inequality in \eqref{NC:theoremeq}, we have
\begin{align}
n\mathcal{C}(\mathcal{V}_{s_2;t})&\geq H(\mathcal{M}_2)\nonumber\\
    &\geq H(\mathcal{M}_2,Y^{n}|X^{n}) - H(Y^{n}|X^{n},\mathcal{M}_2)\nonumber\\
    &\stackrel{(a)}\geq H(\mathcal{M}_2,Y^{n}|X^{n}) - n\epsilon_n\nonumber\\
    &\stackrel{(b)}= H(Y^{n}|X^{n},A^{n}) + H(\mathcal{M}_2|X^{n},Y^{n}) - n\epsilon_n\nonumber\\
    &\stackrel{(c)}= H(Y^{n}|X^{n},A^{n}) - n\epsilon_n\nonumber\\
    &= \sum_{i=1}^{n} H(Y_{i}|A_{i},X_{i}) - n\epsilon_n,
\end{align}
where:
\begin{itemize}
  \item [(a)] follows from Fano's inequality;
  \item [(b)] follows from the fact that $A^{n}$ is a deterministic function of $X^n$;
  \item [(c)] follows from the fact that $\mathcal{M}_2$ is a deterministic function of $X^n,Y^n$.
\end{itemize}
For the sum-rate, we have
\begin{align}
n\mathcal{C}(\mathcal{V}_{s_1,s_2;t})&\geq H(\mathcal{M}_3)\nonumber\\
    &= H(X^{n},Y^{n},\mathcal{M}_3) - H(X^{n},Y^{n}|\mathcal{M}_3)\nonumber\\
    &\stackrel{(a)}\geq H(X^{n},Y^{n},\mathcal{M}_3) - n\epsilon_n\nonumber\\
    &\stackrel{(b)}= H(X^{n},Y^{n})- n\epsilon_n\nonumber\\
    &\stackrel{(c)}= H(X^{n}) + H(Y^{n}|X^{n},A^{n}) - n\epsilon_n\nonumber\\
    &\stackrel{(d)}= \sum_{i=1}^{n} \left[H(X_{i}) + H(Y_{i}|X_{i},A_{i})\right] - n\epsilon_n\nonumber\\
    &= \sum_{i=1}^{n} \left[H(X_{i}) - H(X_{i}|A_{i}) + H(X_{i}|A_{i}) + H(Y_{i}|X_{i},A_{i})\right] - n\epsilon_n\nonumber\\
    &= \sum_{i=1}^{n} \left[I(X_{i};A_{i}) + H(Y_{i},X_{i}|A_{i})\right] - n\epsilon_n,
\end{align}
where:
\begin{itemize}
  \item [(a)] follows from Fano's inequality;
  \item [(b)] follows from the fact that $\mathcal{M}_3$ is a deterministic function of $X^n,Y^n$;
  \item [(c)] follows from the fact that $A^n$ is a deterministic function of $X^n$;
  \item [(d)] follows from the fact the $X^{n}$ is memoryless and the memoryless property \eqref{setup:memoryless}.
\end{itemize}
Let us summarize the lower bounds we have characterized:
\begin{align}\label{converse}
    & \mathcal{C}(\mathcal{V}_{s_1;t})\geq \sum_{i=1}^{n} \frac{1}{n}\left[I(X_{i};A_{i}) - I(Y_{i};A_{i}) + H(X_{i}|A_{i},Y_{i})\right] - \epsilon_{n},\nonumber \\
    & \mathcal{C}(\mathcal{V}_{s_2;t})\geq  \sum_{i=1}^{n} \frac{1}{n}H(Y_{i}|A_{i},X_{i}) - \epsilon_n, \nonumber\\
    & \mathcal{C}(\mathcal{V}_{s_1,s_2;t})\geq \sum_{i=1}^{n} \frac{1}{n}\left[I(X_{i};A_{i}) + H(Y_{i},X_{i}|A_{i})\right] - \epsilon_n,
\end{align}
for some cuts $\mathcal{V}_{s_1;t},\mathcal{V}_{s_2;t},\mathcal{V}_{s_1,s_2;t}$.

To complete the proof, we minimize the left hand side of \eqref{converse} by taking the cuts to be $\mathcal{C}(\mathcal{V}^{\ast}_{s_{1};t}), \mathcal{C}(\mathcal{V}^{\ast}_{s_{2};t}),$ and $\mathcal{C}(\mathcal{V}^{\ast}_{s_{1},s_{2};t})$, respectively. Derivation of the single-letter characterization in \eqref{converse} is done by common time-sharing technique.
\section{Conclusions And Future Work}\label{section:conclusions}
In the current work, we have considered the setup of correlated sources with action-dependent joint distribution. Specifically, the optimal rate regions were characterized for the case where actions taken at the decoder and for the case of actions taken at the encoder. Further, we have presented the set of achievable rates for a scenario where action-dependent sources are known at different nodes of a general network and are required at a set of terminal nodes. Remarkably, RLNC was proved to be optimal also for this scenario, even though this is not a multicast problem. Moreover, the set of achievable rates involved mutual information terms, which are not typical in multicast problems. Two binary examples were studied, and it was shown how actions affect the achievable rate region in a non-trivial manner.

As can be seen from this and additional work \cite{PermuterWiessman_vending_11,zhao14_compression_actions,in_block,action_double_sided,information_embedding,ChiaAsnaniWeissman13_multi_terminal_source_coding_action}, actions have a significant impact on the set of achievable rates in source coding problems and many classical source coding problems can be extended using actions. One particular, as yet unsolved, source coding problem that would be interesting to study is the case of action-dependent source coding with a helper. In this scenario the considered setup is of correlated sources with actions, yet only a reconstruction of $\vX$ is required at the decoder. In the source coding helper problem, the sequence $\vY$ which is being transmitted on a rate-limited link plays the role of SI and not of an information source as in our model. The main difficulty in proving the converse follows from the fact that $\vY$ is not distributed i.i.d. as in the original problem of source coding with a helper \cite{Ahlswede-Korner75}.

\appendix[Analysis of the probability of error for the direct of Theorem \ref{theorem:network}]\label{app:analysis}
Following the direct method in Section \ref{section_network}, the probability of error is analyzed for both cases: a negative and positive sign of the term $I(X;A)-I(Y;A)$.
\subsection{For the case $I(X;A)-I(Y;A)\geq0$}
The events corresponding to possible encoding and decoding errors are as follows:
An encoding error occurs if:
\begin{equation}
  \mathcal{E}_{1} = \{\not\exists i: (\vx,\vA(i))\in \strtyp(X,A)\}.
\end{equation}
For the events of decoding errors, we derive upper bounds for some terminal node $t\in\tau$. Later on, we conclude the complete achievable region by a union bound on all $t\in\tau$. For a terminal node $t\in\tau$, a decoding error will occur for any of the next events:
\begin{align}
  \mathcal{E}_{2} &= \{(\vX,\vA,\vY)\not\in\ \strtyp(X,A,Y)\}, \\
  \mathcal{E}_{3} &= \{\exists \vtX\neq \vX,\vtA\neq \vA: [\ItX, \ItA, \IY]\netz, (\vtX,\vtA,\vY)\in\ \strtyp(X,A,Y)\},\\
  \mathcal{E}_{4} &= \{\exists \vtX\neq \vX:
  [\ItX, \IA, \IY]\netz, (\vtX,\vA,\vY)\in\ \strtyp(X,A,Y)\},\\
  \mathcal{E}_{5} &= \{\exists \vtY\neq \vY:
  [\IX, \IA, \ItY]\netz, (\vX,\vA,\vtY)\in\ \strtyp(X,A,Y)\},\\
  \mathcal{E}_{6} &= \{\exists \vtX\neq \vX,\vtY\neq \vY:
  [\ItX, \IA, \ItY]\netz, (\vtX,\vA,\vtY)\in\ \strtyp(X,A,Y)\},\\
  \mathcal{E}_{7} &= \{\exists \vtX\neq \vX,\vtA\neq \vA,\vtY\neq \vY:
  [\ItX, \ItA, \ItY]\netz, (\vtX,\vtA,\vtY)\in\ \strtyp(X,A,Y)\}.
\end{align}
The total probability of an error can be bounded as:
\begin{align}
    P_{e}^{(n)} &= \Pr(\bigcup_{i=1}^{7}\mathcal{E}_{i}) \nonumber\\
                &\leq \Pr(\mathcal{E}_1\bigcup \mathcal{E}_{2}) + \sum_{i=3}^{7}\Pr(\mathcal{E}_i)\nonumber\\
                &\leq \Pr(\mathcal{E}_1) + \Pr(\mathcal{E}_2|\mathcal{E}_{1}^{C}) + \sum_{i=3}^{7}\Pr(\mathcal{E}_i).
\end{align}
Therefore, we can upper bound each term separately.
\begin{itemize}
  \item[1.] For $\mathcal{E}_{1}$, it is known from the covering lemma \cite[Lemma 3.3]{ElGamal} that $\Pr(\mathcal{E}_1)\rightarrow0$ for $n\rightarrow\infty$ if we fix  $r_A= I(X;A)+\epsilon$.
  \item[2.] Given the event $\mathcal{E}_1^{C}$, and the fact that $Y^{n}$ is generated as the output of a memoryless channel, we use the conditional typicality lemma \cite[Chapter 2]{ElGamal} to show that $\Pr(\mathcal{E}_2|\mathcal{E}_{1}^{C})\rightarrow0$ as $n\rightarrow\infty$.
  \item[3.] To upper-bound $\mathcal{E}_{3}$, we have
  \begin{align}
    &\Pr(\mathcal{E}_{3}) \nonumber\\
    &= \Pr \left( \exists \vtX \neq \vX,\vtA \neq \vA:[\ItX-\IX, \ItA-\IA, \mathbf{0}]\net =\mathbf{0}, (\vtX,\vtA,\vY)\in \strtyp\right)\nonumber\\
  &= \sum\limits_{(\vx,\va,\vy)}\mspace{-25mu}P(\vx,\va,\vy) \Pr\left(\exists\vtX\mspace{-8mu}\neq\vx,\vtA\neq \va:
  [\ItX\mspace{-8mu}-\Ix, \ItA\mspace{-4mu}-\mspace{-4mu}\Ia, \mathbf{0}]\net=\mathbf{0}, (\vtX,\vtA,\vy)\in \mspace{-4mu}\strtyp\right)\nonumber\\
  &= \mspace{-8mu}\sum\limits_{(\vx,\va,\vy)}\mspace{-25mu} P(\vx\mspace{-4mu},\va\mspace{-8mu},\vy\mspace{-2mu})\sum_{\vta\in\mathcal{Q}}\mspace{-10mu}\Pr\left(\exists \vtX\neq \vx \mspace{-8mu}:[\ItX\mspace{-8mu}-\Ix, \Ita\mspace{-8mu}-\Ia, \mathbf{0}]\net=\mathbf{0},(\vtX,\vta,\vy)\in \mspace{-4mu}\strtyp|(\vta\mspace{-8mu},\vy)\in\mspace{-8mu} \strtyp\right),\nonumber\\
  & \text{where $\mathcal{Q}:=\{\vta\in \mathcal{C}:\vta\neq\va,(\vta,\vy)\in \strtyp(A|Y)\}$} \nonumber\\
  &= \sum\limits_{(\vx,\va,\vy)} P(\vx,\va,\vy) \sum_{\vta\in\mathcal{Q}}\sum\limits_{\substack{\vtx\neq\vx:\\ \vtx\in \strtyp(X|Y,A)}}\Pr\left([\Itx-\Ix, \Ita-\Ia, \mathbf{0}]\net=\mathbf{0}\right)\nonumber\\
  &\stackrel{(a)}\leq \sum\limits_{(\vx,\va,\vy)} P(\vx,\va,\vy)\sum_{\vta\in\mathcal{Q}}\sum\limits_{\substack{\vtx\neq\vx:\\ \vtx\in \strtyp(X|Y,A)}}\left(\frac{L_1}{2^n}\right)^{\mathcal{C}(\mathcal{V}^{\ast}_{s_{1};t})}\nonumber\\
  &\stackrel{(b)}\leq \sum\limits_{(\vx,\va,\vy)} P(\vx,\va,\vy)2^{n(r_A-I(Y;A)+2\epsilon)}|\strtyp(X|Y,A)|\left(\frac{L_1}{2^n}\right)^{\mathcal{C}(\mathcal{V}^{\ast}_{s_{1};t})}\nonumber\\
  &\leq \sum\limits_{(\vx,\va,\vy)} P(\vx,\va,\vy)2^{n(I(X;A)-I(Y;A)+H(X|Y,A)+3\epsilon)}\left(\frac{L_1}{2^n}\right)^{\mathcal{C}(\mathcal{V}^{\ast}_{s_{1};t})}\nonumber\\
  &\leq  2^{n(I(X;A)-I(Y;A)+H(X|Y,A)+3\epsilon)}\left(\frac{L_1}{2^{n}}\right)^{\mathcal{C}(\mathcal{V}^{\ast}_{s_{1};t})},
 \end{align}
where:
\begin{itemize}
  \item [(a)] follows from applying Lemma \ref{lemma:bounds}. The notation $L_1$ denotes the maximum path length between $s_1$ and $t$. Note that the binning rate $r_1$ is greater than the source entropy $H(X)$. According to the source-coding theorem \cite[Theorem 3.4]{ElGamal}, the probability that a bin contains two typical sequences tends to zero as $n\rightarrow\infty$. Hence, we can assume that if $\vX\neq\vtX$ are two typical sequences, then $\IX\neq\ItX$;
  \item [(b)] follows from deriving an upper bound on $|\mathcal{Q}|$. Namely, we are interested in the amount of codewords in $\mathcal{C}$ that are jointly typical with $\vy$. One may think of it as a random binning of the codebook at a rate of $r_A-I(Y;A)-2\epsilon$, such that in each bin there are $I(Y;A)-\epsilon$ sequences. Since $\vy$ was generated according to $\va$, which is different from $\vta$, then with high probability there will be only one sequence in each bin that is jointly typical with $\vy$. Therefore, the amount of $\vta$ satisfying $\vta\in\mathcal{Q}$ is bounded by the number of bins, e.g. $2^{n(r_A-I(Y;A)-2\epsilon)}$.
\end{itemize}
  \item[4.] To upper-bound $\Pr(\mathcal{E}_{4})$, we have
  \begin{align}
  \Pr(\mathcal{E}_{4}) &= \Pr \left( \exists \vtX\neq \vX:
  [\ItX-\IX, \mathbf{0}, \mathbf{0}]\net =\mathbf{0}, (\vtX,\vA,\vY)\in \strtyp\right)\nonumber\\
  &= \sum\limits_{(\vx,\va,\vy)} P(\vx,\va,\vy)\Pr\left( \exists \vtX\neq\vx:
  [\ItX-\Ix, \mathbf{0}, \mathbf{0}]\net=\mathbf{0}, (\vtX,\va,\vy)\in \strtyp\right)\nonumber\\
  &= \sum_{(\vx,\va,\vy)} P(\vx,\va,\vy)\sum\limits_{\substack{\vtx\neq\vx:\\\vtx\in \strtyp(X|Y,A)}}\mspace{-30mu}\Pr\left([\Itx-\Ix, \mathbf{0}, \mathbf{0}]\net=\mathbf{0}| (\vtx,\va,\vy)\in \strtyp\right)\nonumber\\
  &\leq \sum\limits_{(\vx,\va,\vy)} P(\vx,\va,\vy)|\strtyp(X|Y,A)|\left(\frac{L_1}{2^n}\right)^{\mathcal{C}(\mathcal{V}^{\ast}_{s_{1};t})}\nonumber\\
  &\leq \sum\limits_{(\vx,\va,\vy)} P(\vx,\va,\vy)2^{n(H(X|Y,A)+\epsilon)}\left(\frac{L_1}{2^n}\right)^{\mathcal{C}(\mathcal{V}^{\ast}_{s_{1};t})}\nonumber\\
  &\leq  2^{n(H(X|Y,A)+\epsilon)}\left(\frac{L_1}{2^n}\right)^{\mathcal{C}(\mathcal{V}^{\ast}_{s_{1};t})}.
  \end{align}
  \item[5.] To upper-bound $\Pr(\mathcal{E}_{5})$, we have
  \begin{align}
  \Pr(\mathcal{E}_{5}) &= \Pr \left( \exists \vtY\neq \vY:
  [\mathbf{0}, \mathbf{0}, \ItY-\IY]\net=\mathbf{0}, (\vX,\vA,\vtY)\in \strtyp\right)\nonumber\\
  &= \sum\limits_{(\vx,\va,\vy)} P(\vx,\va,\vy)\Pr\left(\exists \vtY\neq \vy:
  [\mathbf{0}, \mathbf{0}, \ItY-\Iy]\net=\mathbf{0}, (\vx,\va,\vtY)\in \strtyp\right)\nonumber\\
  &= \sum\limits_{(\vx,\va,\vy)} P(\vx,\va,\vy)\sum\limits_{\substack{\vty\neq\vy:\\\vty\in \strtyp(Y|X,A)}} \Pr\left([\mathbf{0}, \mathbf{0}, \Ity-\Iy]\net=\mathbf{0}\right)\nonumber\\
  &\leq \sum\limits_{(\vx,\va,\vy)} P(\vx,\va,\vy)|\strtyp(Y|X,A)|\left(\frac{L_2}{2^n}\right)^{\mathcal{C}(\mathcal{V}^{\ast}_{s_{2};t})}\nonumber\\
  &= \sum\limits_{(\vx,\va,\vy)} P(\vx,\va,\vy)2^{n(H(Y|X,A)+\epsilon)}\left(\frac{L_2}{2^n}\right)^{\mathcal{C}(\mathcal{V}^{\ast}_{s_{2};t})}\nonumber\\
  &\leq  2^{n(H(Y|X,A)+\epsilon)}\left(\frac{L_2}{2^n}\right)^{\mathcal{C}(\mathcal{V}^{\ast}_{s_{2};t})}.
  \end{align}
  \item[6.] To upper-bound $\Pr(\mathcal{E}_{6})$, we have
  \begin{align}
  &\Pr(\mathcal{E}_{6}) \nonumber\\
  &= \Pr \left( \exists \vtX\neq \vX,\vtY\neq \vY: [\ItX-\IX, \mathbf{0}, \ItY-\IY]\net =\mathbf{0}, (\vtX,\vA,\vtY)\in \strtyp\right)\nonumber\\
  &= \mspace{-12mu}\sum\limits_{(\vx,\va,\vy)} \mspace{-15mu}P(\vx,\va,\vy)\Pr\left( \exists \vtX\neq\vx,\vtY\neq \vy:
      [\ItX-\Ix, \mathbf{0}, \ItY-\Iy]\net =\mathbf{0}, (\vtX,\va,\vtY)\in \strtyp\right)\nonumber\\
  &= \sum_{(\vx,\va,\vy)} P(\vx,\va,\vy)\sum\limits_{\substack{\vtx\neq\vx,\vty\neq\vy:\\(\vtx,\va,\vty)\in \strtyp(X,Y|A)}}\Pr\left([\Itx-\Ix, \mathbf{0}, \Ity-\Iy]\net=\mathbf{0}\right)\nonumber\\
  &\leq \sum\limits_{(\vx,\va,\vy)} P(\vx,\va,\vy)|\strtyp(X,Y|A)|\left(\frac{L_3}{2^n}\right)^{\mathcal{C}(\mathcal{V}^{\ast}_{s_{1},s_{2};t})}\nonumber\\
  &\leq  2^{n(H(X,Y|A)+\epsilon)}\left(\frac{L_3}{2^n}\right)^{\mathcal{C}(\mathcal{V}^{\ast}_{s_{1},s_{2};t})}.
  \end{align}
  \item[7.] To upper-bound $\Pr(\mathcal{E}_{7})$, we have
  \begin{align}
  &\Pr(\mathcal{E}_{7})\nonumber\\
  &= \Pr \left( \exists \vtX\neq \vX,\vtA\neq \vA,\vtY\neq \vY: [\ItX-\IX, \ItA-\IA, \ItY-\IY]\net =\mathbf{0}, (\vtX,\vtA,\vtY)\in \strtyp\right)\nonumber\\
  &= \mspace{-20mu}\sum\limits_{(\vx,\va,\vy)} \mspace{-20mu}P(\vx,\va,\vy)\cdot\nonumber \\
  & \sum_{\vta\neq\va:\vta\in\mathcal{C}}\mspace{-30mu}\Pr\left(\exists \vtX\neq\vx,\vtY\neq\vy:
  [\ItX\mspace{-6mu}-\mspace{-6mu}\Ix, \Ita\mspace{-6mu}-\mspace{-6mu}\Ia, \ItY\mspace{-6mu}-\mspace{-6mu}\Iy]\net =\mathbf{0}, (\vtX,\vta,\vtY)\mspace{-6mu}\in \mspace{-6mu}\strtyp\right)\nonumber\\
  &= \mspace{-20mu}\sum\limits_{(\vx,\va,\vy)} \mspace{-20mu}P(\vx,\va,\vy)\sum_{\vta\neq\va:\vta\in\mathcal{C}}\mspace{-20mu}\sum\limits_{\substack{\vtx\neq\vx,\vty\neq\vy:\\(\vtx,\vta,\vty)\in \strtyp(X,Y|A)}}\mspace{-20mu}\Pr\left([\Itx-\Ix, \Ita-\Ia, \Ity-\Iy]\net =\mathbf{0}\right)\nonumber\\
  &\leq \sum\limits_{(\vx,\va,\vy)} P(\vx,\va,\vy)|\mathcal{C}||\strtyp(X,Y|A)|\left(\frac{L_3}{2^n}\right)^{\mathcal{C}(\mathcal{V}^{\ast}_{s_{1},s_{2};t})}\nonumber\\
  &\leq \sum\limits_{(\vx,\va,\vy)} P(\vx,\va,\vy)2^{n(I(X;A)+\epsilon)}2^{n(H(X,Y|A)+\epsilon)}\left(\frac{L_3}{2^n}\right)^{\mathcal{C}(\mathcal{V}^{\ast}_{s_{1},s_{2};t})}\nonumber\\
  &\leq  2^{n(I(X;A) + H(X,Y|A)+2\epsilon)}\left(\frac{L_3}{2^n}\right)^{\mathcal{C}(\mathcal{V}^{\ast}_{s_{1},s_{2};t})}.
  \end{align}
\end{itemize}
To conclude the achievable region for this case, note that the events $\mathcal{E}_4$ and $\mathcal{E}_6$ yield redundant constraints; thus, the total probability of error tends to zero for a finite size of network,$L_3$ and large $n$ only if the inequalities in \eqref{NC:theoremeq} are satisfied.
\subsection{For the case $I(X;A)-I(Y;A)\leq0$}
\textbf{Error Analysis:}
 The events corresponding to possible encoding and decoding errors in a terminal node $t\in\tau$ are as follows:
\begin{align}
  \mathcal{E}_{1} &= \{\not\exists\vA: (\vx,\vA)\in \strtyp(X,A),\mathcal{B}_{\vA}=\mathcal{B}_{\vx}\}\\
  \mathcal{E}_{2} &= \{(\vX,\vA,\vY)\not\in\ \strtyp(X,A,Y)\}, \\
  \mathcal{E}_{3} &= \{\exists \vtX\neq \vX:
  [\ItX, \IA, \IY]\netz, \mathcal{B}_{\vA}=\mathcal{B}_{\vtX}, (\vtX,\vA,\vY)\in\ \strtyp(X,A,Y)\},\\
  \mathcal{E}_{4} &= \{\exists \vtY\neq \vY:
  [\IX, \IA, \ItY]\netz, (\vX,\vA,\vtY)\in\ \strtyp(X,A,Y)\},\\
  \mathcal{E}_{5} &= \{\exists \vtX\neq \vX,\vtY\neq \vY:
  [\ItX, \IA, \ItY]\netz, (\vtX,\vA,\vtY)\in\ \strtyp(X,A,Y)\},\\
  \mathcal{E}_{6} &= \{\exists \vtX\neq \vX,\vtA\neq \vA,\vtY\neq \vY:
  [\ItX, \ItA, \ItY]\netz, (\vtX,\vtA,\vtY)\in\ \strtyp(X,A,Y)\}.
\end{align}

\begin{itemize}
  \item[1.] $\Pr(\mathcal{E}_1)\rightarrow0$ for $n\rightarrow\infty$ from the covering lemma since each bin $\mathcal{B}_{\vA}$ contains $I(X;A)+\epsilon$ codewords.
  \item[2.]  $\Pr(\mathcal{E}_2|\mathcal{E}_{1}^{C})\rightarrow0$ as $n\rightarrow\infty$ from the same arguments of the case $I(X;A)-I(Y;A)\geq0$.
  \item[3.] To upper-bound $\Pr(\mathcal{E}_{3})$, we have
  \begin{align}
    \Pr(\mathcal{E}_{3}) &= \Pr \left( \exists \vtX \neq \vX:
  [\ItX-\IX,\mathbf{0} , \mathbf{0}]\net =\mathbf{0}, (\vtX,\vA,\vY)\in \strtyp,\mathcal{B}_{\vA}=\mathcal{B}_{\vtX}\right)\nonumber\\
  &= \sum\limits_{(\vx,\va,\vy)}P(\vx,\va,\vy) \nonumber\\
  &\hspace{5mm}\Pr\left(\exists\vtX\neq\vx:
  [\ItX-\Ix, \mathbf{0}, \mathbf{0}]\net=\mathbf{0}, (\vtX,\va,\vy)\in \strtyp,\mathcal{B}_{\va}=\mathcal{B}_{\vtX}\right)\nonumber\\
  &= \sum\limits_{(\vx,\va,\vy)}P(\vx,\va,\vy)\sum_{\vtx\in\mathcal{Q}}\Pr\left([\Itx-\Ix,\mathbf{0} , \mathbf{0}]\net=\mathbf{0}\right),\nonumber\\
  & \text{where $\mathcal{Q}:=\{\vtx:\vtx\neq\vx,(\vtx,\vy,\va)\in \strtyp(X|Y,A),\mathcal{B}_{\va}=\mathcal{B}_{\vtx}$\}} \nonumber\\
  &\stackrel{(a)}\leq \sum\limits_{(\vx,\va,\vy)} P(\vx,\va,\vy)|\mathcal{Q}|\left(\frac{L_1}{2^n}\right)^{\mathcal{C}(\mathcal{V}^{\ast}_{s_{1};t})}\nonumber\\
  &\stackrel{(b)}\leq \sum\limits_{(\vx,\va,\vy)} P(\vx,\va,\vy)2^{n(H(X|Y,A)-\Delta+2\epsilon)}\left(\frac{L_1}{2^n}\right)^{\mathcal{C}(\mathcal{V}^{\ast}_{s_{1};t})}\nonumber\\
  &\leq  2^{n(I(X;A)-I(Y;A)+H(X|Y,A)+3\epsilon)}\left(\frac{L_1}{2^{n}}\right)^{\mathcal{C}(\mathcal{V}^{\ast}_{s_{1};t})},
 \end{align}
where:
\begin{itemize}
  \item [(a)] follows from applying Lemma \ref{lemma:bounds}. The notation $L_1$ denotes the maximum path length between $s_1$ and $t$. Note that $\vx\neq\vtx$ implies $\Ix\neq\Itx$ from the same arguments in the analysis of the case $I(X;A)-I(Y;A)\geq0$;
  \item [(b)] follows from deriving an upper bound on $|\mathcal{Q}|$. Namely, we are interested in the amount of source sequences $\vtX$ that are jointly typical with $(\vy,\va)$, moreover the first $n\Delta$ bits of $\Itx$ need to be identical to the bin $\mathcal{B}_{\va}$. The size of this conditional typical set is $2^{n(H(X|Y,A)+2\epsilon)}$, since we know the first $n\Delta$ bits the amount of sequences that fall into this criteria is $2^{n(H(X|Y,A)-\Delta)}$.
  \end{itemize}
\item[4.] To upper-bound $\Pr(\mathcal{E}_{4})$, we have
  \begin{align}
    \Pr(\mathcal{E}_{4}) &= \Pr \left(\exists \vtY\neq \vY:
  [\IX, \IA, \ItY]\netz, (\vX,\vA,\vtY)\in\ \strtyp(X,A,Y)\right)\nonumber\\
  &= \sum\limits_{(\vx,\va,\vy)}P(\vx,\va,\vy)\Pr\left(\exists\vtY\neq\vy:
  [\mathbf{0}, \mathbf{0},\ItY-\Iy]\net=\mathbf{0}, (\vx,\va,\vtY)\in \strtyp\right)\nonumber\\
  &= \sum\limits_{(\vx,\va,\vy)}P(\vx,\va,\vy)\sum_{\vty\in \strtyp(Y|X,A)}\Pr\left([\mathbf{0},\mathbf{0},\Ity-\Iy]\net=\mathbf{0}\right),\nonumber\\
  &\leq \sum\limits_{(\vx,\va,\vy)} P(\vx,\va,\vy)|\strtyp(Y|X,A)|\left(\frac{L_2}{2^n}\right)^{\mathcal{C}(\mathcal{V}^{\ast}_{s_{2};t})}\nonumber\\
  &\leq \sum\limits_{(\vx,\va,\vy)} P(\vx,\va,\vy)2^{n(H(Y|X,A)+2\epsilon)}\left(\frac{L_2}{2^n}\right)^{\mathcal{C}(\mathcal{V}^{\ast}_{s_{2};t})}\nonumber\\
  &\leq  2^{n(H(Y|X,A)+2\epsilon)}\left(\frac{L_2}{2^{n}}\right)^{\mathcal{C}(\mathcal{V}^{\ast}_{s_{2};t})},
 \end{align}
  \item[5.] To upper-bound $\Pr(\mathcal{E}_{5})$, we have
  \begin{align}
  &\Pr(\mathcal{E}_{5}) \nonumber\\
  &= \Pr \left( \exists \vtX\neq \vX,\vtY\neq \vY:
  [\ItX-\IX, \mathbf{0}, \ItY-\IY]\net =\mathbf{0}, (\vtX,\vA,\vtY)\in \strtyp,\mathcal{B}_{\vA}=\mathcal{B}_{\vtX}\right) \nonumber\\
  &= \sum\limits_{(\vx,\va,\vy)} P(\vx,\va,\vy)\cdot\nonumber\\
  &\hspace{5mm} \Pr\left( \exists \vtX\neq\vx,\vtY\neq \vy:
  [\ItX-\Ix, \mathbf{0}, \ItY-\Iy]\net =\mathbf{0}, (\vtX,\va,\vtY)\in \strtyp\right)\nonumber\\
  &\leq \sum_{(\vx,\va,\vy)} P(\vx,\va,\vy)\sum\limits_{\substack{\vtx\neq\vx,\vty\neq\vy:\\(\vtx,\va,\vty)\in \strtyp(X,Y|A)}}\Pr\left([\Itx-\Ix, \mathbf{0}, \Ity-\Iy]\net=\mathbf{0}\right)\nonumber\\
  &\leq \sum\limits_{(\vx,\va,\vy)} P(\vx,\va,\vy)|\strtyp(X,Y|A)|\left(\frac{L_3}{2^n}\right)^{\mathcal{C}(\mathcal{V}^{\ast}_{s_{1},s_{2};t})}\nonumber\\
  &\leq  2^{n(H(X,Y|A)+\epsilon)}\left(\frac{L_3}{2^n}\right)^{\mathcal{C}(\mathcal{V}^{\ast}_{s_{1},s_{2};t})}.
  \end{align}
  \item[6.] To upper-bound $\Pr(\mathcal{E}_{6})$, we have
  \begin{align}
  &\Pr(\mathcal{E}_{6})\nonumber\\
  &=\Pr \left(\exists \vtX\mspace{-10mu}\neq \vX,\vtA\mspace{-10mu}\neq \vA,\vtY\mspace{-10mu}\neq \vY\mspace{-7mu}:
  [\ItX\mspace{-15mu}-\IX, \ItA\mspace{-15mu}-\IA, \ItY\mspace{-15mu}-\IY]\net =\mathbf{0}, (\vtX\mspace{-7mu},\vtA\mspace{-7mu},\vtY\mspace{-4mu})\in \strtyp,\mathcal{B}_{\vtA}\mspace{-4mu}=\mspace{-4mu}\mathcal{B}_{\vtX}\right)\nonumber\\
  &= \sum\limits_{(\vx,\va,\vy)} P(\vx,\va,\vy)\cdot\nonumber \\
  &\Pr\left( \exists \vtX\neq\vx, \vtA\neq\va, \vtY\neq\vy:
  [\ItX\mspace{-15mu}-\Ix, \ItA\mspace{-15mu}-\Ia, \ItY\mspace{-15mu}-\Iy]\net =\mathbf{0}, (\vtX,\vtA,\vtY)\in \strtyp,\mathcal{B}_{\vtA}=\mathcal{B}_{\vtX}\right)\nonumber\\
  &= \sum\limits_{(\vx,\va,\vy)} P(\vx,\va,\vy)\cdot\nonumber\\
  &\sum_{\vtx\in \strtyp(X)}\mspace{-10mu}\Pr\left(\exists \vtA\neq \va,\vtY\neq\vy:
  [\Itx\mspace{-6mu}-\Ix, \Ita\mspace{-6mu}-\Ia, \ItY\mspace{-6mu}-\Iy]\net =\mathbf{0}, (\vtx,\vtA,\vtY)\in \strtyp,\mathcal{B}_{\vtA}=\mathcal{B}_{\vtx}\right)\nonumber\\
  &\stackrel{(a)}= \sum\limits_{(\vx,\va,\vy)} P(\vx,\va,\vy)\sum_{\vtx\in \strtyp(X)}\sum\limits_{\substack{\vty\neq\vy:\\(\vtx,\vta,\vty)\in \strtyp(Y|X,A)}}\Pr\left([\Itx-\Ix, \Ita-\Ia, \Ity-\Iy]\net =\mathbf{0}\right)\nonumber\\
  &\leq \sum\limits_{(\vx,\va,\vy)} P(\vx,\va,\vy)|\strtyp(X)||\strtyp(Y|X,A)|\left(\frac{L_3}{2^n}\right)^{\mathcal{C}(\mathcal{V}^{\ast}_{s_{1},s_{2};t})}\nonumber\\
  &\leq \sum\limits_{(\vx,\va,\vy)} P(\vx,\va,\vy)2^{n(H(X)+\epsilon)}2^{n(H(Y|X,A)+\epsilon)}\left(\frac{L_3}{2^n}\right)^{\mathcal{C}(\mathcal{V}^{\ast}_{s_{1},s_{2};t})}\nonumber\\
  &\leq  2^{n(H(X) + H(Y|X,A)+2\epsilon)}\left(\frac{L_3}{2^n}\right)^{\mathcal{C}(\mathcal{V}^{\ast}_{s_{1},s_{2};t})}\nonumber\\
  &=  2^{n(I(X;A) + H(X,Y|A)+2\epsilon)}\left(\frac{L_3}{2^n}\right)^{\mathcal{C}(\mathcal{V}^{\ast}_{s_{1},s_{2};t})},
  \end{align}
\end{itemize}
where $(a)$ follows from the fact that for a given $\vtx$, there is only one actions codeword denoted by $\vta$ which is jointly typical with $\vtx$ and satisfying $\mathcal{B}_{\vta}=\mathcal{B}_{\vtx}$.

Note that the constraint induced by the event $\mathcal{E}_5$ is redundant, thus, the constraints in \eqref{NC:theoremeq} are sufficient to show that the total probability of error tends to zero as $n$ tends to infinity.
\bibliography{ref}
\bibliographystyle{IEEEtran}
\end{document}